\definecolor{Dark Ruby Red}{HTML}{7c1b1e}
\definecolor{Dark Blue Sapphire}{HTML}{004452} 
\definecolor{Dark Gamboge}{HTML}{be7c00}
\newif\ifdraft
\newcommand{\cmark}{\ding{51}}
\newcommand{\xmark}{\ding{55}}
\definecolor{green}{RGB}{0,120,0}
\definecolor{hlyellow}{RGB}{250, 250, 190}
\definecolor{apeditcolor}{RGB}{210,255,210}
\definecolor{varadeditcolor}{RGB}{255,210,210}
    \newcommand{\sideanantha}[1]{\todo[backgroundcolor=apeditcolor, size=\tiny]{{\bf A:} #1}}
    \newcommand{\anantha}[1]{\todo[inline,color=apeditcolor, size=\footnotesize]{{\bf A:} #1}}
    \newcommand{\sidevarad}[1]{\todo[backgroundcolor=varadeditcolor, size=\tiny]{{\bf V:} #1}}
    \newcommand{\varad}[1]{\todo[inline,color=varadeditcolor, size=\footnotesize]{{\bf V:} #1}}
	\newcommand{\sideanantha}[1]{}
    \newcommand{\anantha}[1]{}
    \newcommand{\sidevarad}[1]{}
    \newcommand{\varad}[1]{}
\renewcommand{\epsilon}{\varepsilon}
\newcommand{\nat}{\mathbb{N}}
\renewcommand{\implies}{\rightarrow}
\renewcommand{\phi}{\varphi}
\knowledgenewrobustcmd{\Var}{\cmdkl{\textit{vars}}}
\knowledgenewrobustcmd{\Ps}{\cmdkl{\mathcal{P}}}
\knowledgenewrobustcmd{\FV}[1][\phi]{\cmdkl{\textit{FV}(#1)}}
\knowledgenewrobustcmd{\SF}[1][\phi]{\cmdkl{\textit{SF}(#1)}}
\knowledgenewrobustcmd\FOML{\cmdkl{\textup{FOML}}\xspace}
\knowledgenewrobustcmd\FO{\cmdkl{\textup{FO}}\xspace}
\knowledgenewrobustcmd{\AB}{\texttt{AB}\xspace}
\knowledgenewrobustcmd{\EB}{\texttt{EB}\xspace}
\knowledgenewrobustcmd{\BA}{\texttt{BA}\xspace}
\knowledgenewrobustcmd{\BE}{\texttt{BE}\xspace}
\knowledgenewrobustcmd{\ABEB}{\texttt{ABEB}\xspace}
\knowledgenewrobustcmd{\ABBA}{\texttt{ABBA}\xspace}
\knowledgenewrobustcmd{\ABBE}{\texttt{ABBE}\xspace}
\knowledgenewrobustcmd{\EBBA}{\texttt{EBBA}\xspace}
\knowledgenewrobustcmd{\EBBE}{\texttt{EBBE}\xspace}
\knowledgenewrobustcmd{\BABE}{\texttt{BABE}\xspace}
\knowledgenewrobustcmd{\ABEBBA}{\texttt{ABEBBA}\xspace}
\knowledgenewrobustcmd{\ABEBBE}{\texttt{ABEBBE}\xspace}
\knowledgenewrobustcmd{\ABBABE}{\texttt{ABBABE}\xspace}
\knowledgenewrobustcmd{\EBBABE}{\texttt{EBBABE}\xspace}
\knowledgenewrobustcmd{\ABEBBABE}{\texttt{ABEBBABE}\xspace}
\knowledgenewrobustcmd{\ourEBBE}{\cmdkl{\exists\Box+\Box\exists}}
\knowledgenewrobustcmd{\M}{\cmdkl{\mathcal{M}}}
\knowledgenewrobustcmd{\W}{\cmdkl{\mathcal{W}}}
\knowledgenewrobustcmd{\R}{\cmdkl{\mathcal{R}}}
\knowledgenewrobustcmd{\D}{\cmdkl{\mathcal{D}}}
\knowledgenewrobustcmd{\val}{\cmdkl{\rho}}
\knowledgenewrobustcmd{\live}{\cmdkl{\delta}}
\knowledgenewrobustcmd{\phiOne}{\cmdkl{\phi_1}}
\knowledgenewrobustcmd{\psiOne}{\cmdkl{\psi_1}}
\knowledgenewrobustcmd{\psiTwo}{\cmdkl{\psi_2}}
\knowledgenewrobustcmd{\psiThree}{\cmdkl{\psi_3}}
\knowledgenewrobustcmd{\psiFour}{\cmdkl{\psi_4}}
\knowledgenewrobustcmd{\psiFive}{\cmdkl{\psi_5}}
\knowledgenewrobustcmd{\psiSix}{\cmdkl{\psi_6}}
\knowledgenewrobustcmd{\psiSeven}{\cmdkl{\psi_7}}
\knowledgenewrobustcmd{\psiEight}{\cmdkl{\psi_8}}
\knowledgenewrobustcmd{\psiNine}{\cmdkl{\psi_9}}
\knowledgenewrobustcmd{\psiTen}{\cmdkl{\psi_{10}}}
\knowledgenewrobustcmd{\psiEleven}{\cmdkl{\psi_{11}}}
\knowledgenewrobustcmd{\psiTwelve}{\cmdkl{\psi_{12}}}
\knowledgenewrobustcmd{\psiThirteen}{\cmdkl{\psi_{13}}}
\knowledgenewrobustcmd{\psiFourteen}{\cmdkl{\psi_{14}}}
\knowledgenewrobustcmd{\psiFifteen}{\cmdkl{\psi_{15}}}
\knowledgenewrobustcmd{\atom}[1][\phi]{\cmdkl{\Lambda_{#1}}}
\knowledgenewrobustcmd{\SFplus}[1][\phi]{\cmdkl{\textsf{SF}^+(#1)}}
\knowledgenewrobustcmd{\comp}[1][\phi]{\cmdkl{\ensuremath{\textsf{Comp}(#1)}}}
\knowledgenewrobustcmd{\compPlus}[1][\phi]{\cmdkl{\ensuremath{\textsf{Comp}^+(#1)}}}
\knowledgenewrobustcmd{\rename}{\cmdkl{\eta}}
\knowledgenewrobustcmd{\skolemForest}[1][\Gamma]{\cmdkl{\mathcal{F}_{#1}}}
\knowledgenewrobustcmd{\forestVariable}{\cmdkl{~\in~}}
\knowledgenewrobustcmd{\expandedGamma}[1][\Gamma]{\cmdkl{\overline{#1}}}
\knowledgenewrobustcmd{\varsInPhi}[1][\phi]{\cmdkl{\textit{vars}({#1})}}
\knowledgenewrobustcmd{\innerExVar}[1][\phi]{\cmdkl{\textit{inner-}\exists\textit{Var}(#1)}}
\knowledgenewrobustcmd{\outerExVar}[1][\phi]{\cmdkl{\textit{outer-}\exists\textit{Var}(#1)}}
\knowledgenewrobustcmd{\forallVar}[1][\phi]{\cmdkl{\forall\textit{Var}(#1)}}
\knowledgenewrobustcmd{\andSet}[1][\Gamma]{\cmdkl{\hat{#1}}}
\knowledgenewrobustcmd{\forestLabel}{\cmdkl{\Pi}}
\knowledgenewrobustcmd{\lastW}[1][w]{\cmdkl{t_{#1}}}
\knowledgenewrobustcmd{\notInit}{\cmdkl{\#}}
\knowledgenewrobustcmd{\emptyTree}{\cmdkl{\epsilon}}
\knowledgenewrobustcmd{\andRule}{\cmdkl{(\land-\textit{rule})}}
\knowledgenewrobustcmd{\orRule}{\cmdkl{(\lor-\textit{rule})}}
\knowledgenewrobustcmd{\diamondRule}{\cmdkl{(\Diamond-\textit{rule})}}
\knowledgenewrobustcmd{\nestedForallRule}{\cmdkl{(\textit{nested }\forall-\textit{rule})}}
\knowledgenewrobustcmd{\trivialSkolemRule}{\cmdkl{(\textit{trivial-skolem-rule})}}
\knowledgenewrobustcmd{\forallRule}{\cmdkl{(\forall-\textit{rule})}}
\knowledgenewrobustcmd{\existsRule}{\cmdkl{(\exists-\textit{rule})}}
\knowledgenewrobustcmd{\EndRule}{\cmdkl{(\textit{end}-\textit{rule})}}
\knowledgenewrobustcmd{\identity}{\cmdkl{\iota}}
\newtheorem{theorem}{Theorem}
\newtheorem{lemma}[theorem]{Lemma}
\newaliascnt{lemma}{theorem}
\newtheorem{definition}[theorem]{Definition}
\newaliascnt{definiton}{theorem}
\newaliascnt{proposition}{theorem}
\newaliascnt{cor}{theorem}
\newtheorem{claim}[theorem]{Claim}
\newaliascnt{claim}{theorem}
\crefname{thm}{Theorem}{Theorems}
\crefname{proposition}{Proposition}{Propositions}
\crefname{claim}{Claim}{Claims}
\crefname{lemma}{Lemma}{Lemmas} 
\crefname{definition}{Definition}{Definitions} 
\title{A Decidable Bundled Fragment of First-Order Modal Logic Without Finite Model Property}
\author{%
Varad Joshi$^1$\and
Anantha Padmanabha$^2$\\
\affiliations
$^1$IISER, Pune, India\\
$^2$Department of Computer Science and Engineering, Indian Institute of Technology Madras, Chennai, India\\
\emails
joshi.varad@students.iiserpune.ac.in,
ananthap@cse.iitm.ac.in
}
\begin{document}

\maketitle

\begin{abstract}
  The "satisfiability problem" for First-order Modal Logic (\FOML) is undecidable even for simple fragments like having only unary predicates, two variables "etc" Recently a new way to identify decidable fragments of \FOML has been introduced in \cite{BundledJournal} called the `bundled fragments', where the quantifiers and modalities are restricted to appear together. Since there are many ways to bundle the quantifiers together, some of them lead to (un)decidable fragments. In \cite{BundledJournal} the authors prove a `trichotomy', where they show that every bundled fragment falls into one of the following three categories: $(1)$ Those that satisfy "finite model property" (and hence decidable), $(2)$ Those that are undecidable, and $(3)$ Those that do not satisfy "finite model property" (whose decidability is left open).

  In this paper we collapse the trichotomy into a dichotomy over "increasing domain models" by proving that the one combination that falls into the last category is indeed decidable. 
\end{abstract}

\section{Introduction}
\label{sec-Intro}
First-order Modal Logic (\FOML) extends First-order Logic with "modal operators" ($\Box$ and $\Diamond$). In \FOML we can write formulas of the form $\forall x (\Box P(x) \lor \exists y \Diamond Q(x,y))$. The power to assert modalities along with quantification is useful in various scenarios involving quantified epistemic/temporal settings. For instance, in the ontological/epistemic setting, we can state {\em ``Agent knows that every mammal is an animal''} as $\Box\forall x~(\mathtt{mammal(x)}\rightarrow\mathtt{animal(x)})$. In a legal reasoning scenario, the formula $\forall x~(\mathtt{citizen(x) \land \mathtt{adult(x)}) \rightarrow \Box \mathtt{PayTax(x)}}$ asserts that {\em ``All adult citizens are obligated pay taxes''}. Consequently, \FOML has been used to study the quantified versions of temporal logic \cite{temporal1}\cite{temporal2}\cite{temporal3}, epistemic logic \cite{epistemic1}, planning \cite{planning1} "etc"

Despite its potential ubiquitous applications, \FOML is relatively less studied in the literature compared to its propositional counterpart. One important reason for this is that the logic is computationally unfriendly. Very simply fragments of \FOML already become undecidable. For instance \cite{KripkeUndec} proves that \FOML restricted to just unary predicates is undecidable. Along the same lines, almost all decidable fragments of First-order logic like the two variable fragment, guarded fragments "etc" are undecidable when extended with "modal operators" \cite{FOML2varUndec1,FOML2varUndec2}. 

By now we understand the (un)decidable fragments of First-order Logic quite well with many books dedicated to this topic like \cite{ShelahBook}. On the other hand, Propositional Modal Logic and most of its variants/extensions remain decidable \cite{vardi11997modal}. Compared to how thoroughly First-order Logic and Propositional Modal Logic are studied in the literature, it is a bit surprising that the (un)decidability border for \FOML has not gained the attention of the researchers as much as it deserves. Finding interesting decidable fragments of \FOML remains a challenging pursuit.

\medskip
In the early $2000$s one such decidable fragment of \FOML called the {\em monodic fragment} \cite{Monodic1,Monodic2} was identified. In this restriction, inside the scope of "modal operators" there can be at most one free variable. For instance $\forall x (P(x) \lor \Diamond \exists y Q(x,y))$ is  a monodic formula but $\forall x \exists y(P(x) \lor \Diamond Q(x,y))$ is not a monodic formula.  \cite{Monodic1,Monodic2} prove that most decidable fragments of First-order Logic when extended to \FOML under the monodic restriction remains decidable. The decidability relies on the fact that since every modal formula has at most one free variable, it can be treated similar to unary predicates and can be captured by {\em `$1$-types'}. For over a decade, this was the only known approach to obtain decidable fragments of \FOML.


Recently, in pursuit of the decidable fragments of \FOML a new paradigm has emerged called the {\em `"Bundled fragment"'}. In this restriction, the quantifiers and modalities should always occur together. The formula $\Box\forall x~(\mathtt{mammal(x)}\rightarrow\mathtt{animal(x)})$ is a `bundled formula' because the quantifier $\forall$ is preceded by $\Box$. Similarly we can have other restrictions like $\exists$ operator should be succeeded/preceded by a $\Box$ operator "etc" This idea was initiated in \cite{Wang17} to study epistemic logic of {\em know how, know why "etc",} which uses the $\exists \Box$ restriction ("ie" $\exists$ should always be succeeded by $\Box$). For instance $\exists x \Box (\alpha(x))$ can be interpreted as {\em``The agent knows $x$ which explains why or how $\alpha$ is true''}. \cite{Wang17} proves that restricting formulas where every $\exists$ is succeeded by a $\Box$ gives us a decidable fragment of $\FOML$.

But then there are other natural "bundled fragments" ($4$ in total and their duals: like every $\forall / \exists$ should be succeeded/preceded by $\Box$ operator) and the combinations thereof. For instance a formula that restricts every $\forall$ to be succeeded by a $\Box$ operator can have formulas of the form $\forall x \Box\Big( P(x) \lor \exists y\Diamond Q(x,y)\Big)$ since $\exists y \Diamond \alpha$ would be the dual of $\forall y \Box \alpha$. Combination of one or more bundled fragemtns lead to more expressive and interesting logics. For instance, allowing $\forall\Box + \Box\exists$ can assert formulas like $\Diamond \forall x \forall y \Box( P(x,y))$. Classifying the decidability of all the "bundled fragments" and combinations thereof is a natural question both from the foundations of \FOML perspective and towards obtaining practical tools for \FOML in the long run. Interestingly, the decidability of "bundled fragments" depends on whether we are looking at "increasing domain models" or "constant domain models" \cite{BundledJournal}. On the other hand, "bundled fragments" do not restrict the number of variables, arity of predicates or any other such syntactic parameters.

In \cite{BundledFSTTCS} the authors prove that if we allow both $\exists \Box$ and $\forall \Box$ then we get a {\em decidable fragment of \FOML} over "increasing domain models", but having just $\forall\Box$ is already undecidable over unary predicates on "constant domain models". \cite{BundledJournal} takes this a step further and attempts to classify all possible "bundled fragments" and their combinations into (un)decidable fragments. In particular, they classify them into three categories: $(1)$ Those that satisfy "finite model property" (and are hence decidable); $(2)$ Those that are undecidable; and $(3)$ Those that do not satisfy "finite model property". However, the decidability status of the fragments falling into $(3)$ is left open in \cite{BundledJournal}. Thus, their attempt of classification into (un)decidable fragments falls just short and they manage to prove a {\em `trichotomy'} rather than a {\em dichotomy'}.  \Cref{classification} describes the results from \cite{BundledMFCS} for "increasing domain models"\footnote{We restrict our attention to "Increasing domain models" only. The results in \cite{BundledJournal} also has "constant domain models" and the loosely bundled fragment.}.

\begin{figure}
\begin{center}
\begin{tabular}{|l|l|l|l||l|}
\hline
$\forall\Box$&$\exists\Box$&$\Box\forall$&$\Box\exists$&Remark \\
\hline
\hline






\cmark & \cmark & \xmark & \xmark &  Satisfies\\
\cline{1-4} 
\cmark & \xmark & \cmark & \cmark &  "Finite model property"\\
&&&&(Hence decidable)\\
\hline\hline

$\star$ & \cmark & \cmark & $\star$ & \\ 
\cline{1-4}
\cmark& $\star$& \cmark & \cmark& Undecidable\\
\cline{1-4}
\cmark & \cmark & $\star$ & \cmark &\\
\hline\hline

\xmark & \cmark & \xmark & \cmark & Does not satisfy\\
&&&&"Finite model property" \\
&&&&(Decidability is left open\\&&&& in \cite{BundledJournal})\\
\hline


\end{tabular}
\end{center}
\caption{
Reproducing part of the table from \protect\cite[Figure $1$]{BundledJournal} for the "increasing domain models". Here
$\star$ indicates that the presence/absence of the particular fragment is not relevant. Note that the classification is exhaustive.}

\label{classification}
\end{figure}

Over "increasing domain models" there is one fragment that falls into the third category which allows $\exists\Box$ and $\Box\exists$ formulas. This fragment does not satisfy "finite model property" because we can write a formula of the form $\Diamond \forall x \exists y \Box \alpha$ (since $\Diamond \forall$ is the dual of $\Box \exists$) where the $\forall \exists$ alternation applies to the same world. Using this, we can write a formula that encodes a linear order without maximal element. However, there is no way to have a third variable in the same place to assert `grid like property' or the `transitive property' which is typically needed to encode undecidability.

\paragraph{Our contribution}
The goal of this paper is to collapse the {\em `trichotomy'} of \autoref{classification} to a {\em `dichotomy'}.
In particular, we show that the "bundled fragments" that allows formulas of the form $\exists\Box$ and $\Box\exists$ is decidable. 
Our proof relies on a tableau construction. However, since we cannot hope to have a finite tableau, we define an additional `forest like structure' associated with every node of the tableau. 

Looking at a broader picture, from the propositional modal logic perspective, there are some decidable logics that do not have "finite model property". However, such logics typically involve `transitive frame restrictions' that force an infinite number of "worlds". However, $\exists\Box + \Box\exists$ fragment of \FOML is different since it can force infinite models even when the number of worlds required to satisfy the formula is finite, by forcing the domain to be infinite (\autoref{thm-EBBE has no FMP}). Looking at $\exists\Box + \Box\exists$ as an extension of First-order Logic, note that most\footnote{There are very few exceptions like the BSR class \cite[Section $6.2$]{ShelahBook} and guarded fix point logics\cite{gradel1999guarded}.} decidability proofs of First-Order Logic and its extensions proceed via "finite model property". Thus $\exists\Box + \Box\exists$ fragment of \FOML is yet an other rare extension of First-order Logic that remains decidable even though it violates "finite model property".

\paragraph{Organization of the paper}
In \autoref{sec-prelims} we define the syntax and semantics of \FOML, and then introduce the "bundled fragments". In \autoref{sec-ebbe} we discuss the $\exists\Box + \Box\exists$ fragment and its unique properties in detail. 
In \autoref{sec-tableaux} we introduce the tableau procedure for $\exists\Box + \Box\exists$. The next two sections \autoref{sec-soundness} and \autoref{sec-completeness} proves the Soundness and Completeness of the tableau procedure "respectively". We conclude in \autoref{sec-conclusion} with some discussions and future directions.

\section{Preliminary}
\label{sec-prelims}
\AP
The syntax of First-order modal logic ($\intro*\FOML$) is given by extending the First-order logic with ""modal operators"" ($\Box$ and $\Diamond$). Note that we exclude equality, constants, and function symbols from the syntax.

\AP
\begin{definition}[$\FOML$ syntax]
\label{def: FOML syntax}
Given a countable set of predicates $\intro*\Ps$  and  a countable set of variables $\intro*\Var$, the syntax\footnote{Note that we exclude equality, constants, and function symbols from the syntax.} of First-order Modal Logic ($\FOML$) is given by:
$$\alpha ::=  P(x_1,\ldots,x_n)  \mid \neg \alpha \mid \alpha \land \alpha  \mid \exists x \alpha \mid \Box \alpha $$ 
where $P\in \Ps$ has arity $n$ and $x,x_1,\ldots, x_n \in \Var$.
\end{definition}
\AP
The boolean connectives $\lor, \rightarrow$, $\leftrightarrow$, the "modal operator" $\Diamond$, and the quantifier $\forall$ are all defined in the standard way. The set of free variables in a formula, denoted by $\intro*\FV$, is similar to what we have for first-order logic (by ignoring the modalities) and let $\intro*\varsInPhi$ denote the set of all variables that occur in $\phi$. We write $\phi(x)$ to mean that $x$ occurs as a free variable of $\phi$. Also, $\phi[y/x]$ denotes the formula obtained from $\phi$ by replacing every free occurrence of $x$ by $y$. The set of subformulas of a formula $\phi$ denoted by $\intro*\SF$ is defined along the standard lines. A formula of the form $P(x_1,\ldots,x_n)$ or $\neg P(x_1,\ldots,x_n)$ is called a ""literal"".

\AP
\begin{definition}[$\FOML$ structure]
\label{def: FOML structure}
An ""increasing domain model""\footnote{An \FOML structure is called a ""constant domain model"" if $\live(w) = \D$ for all $w\in \W$. Since we only consider "increasing domain models" in this paper, we do not define the "constant domain model" in the main paper.} for $\FOML$ is defined by the tuple $\intro*\M = (\W, \D, \live, \R, \val)$ where 
$\intro*\W$ is a non-empty countable set called ""worlds""; $\intro*\D$ is a non-empty countable set called ""domain""; $\intro*\R\subseteq (\W\times \W)$ is the ""accessibility relation"". The map $\intro*\live:\W\mapsto 2^{\D}$ assigns to each 
$w\in \W$ a \textit{non-empty} ""local domain"" set such that whenever\footnote{The monotonicity condition asserting whenever $(w,v)\in \R$ we have $\live(w)\subseteq \live(v)$, is required for evaluating the free variables present in the formula ("cf" \cite{Cresswell96}). Because of this, the models are called  "increasing domain models".}
$(w,v) \in \R$ we have $\live(w)\subseteq \live(v)$ and 
$\intro*\val: (\W\times \Ps) \mapsto \bigcup\limits_{n}2^{{\D}^n}$ is the ""valuation function"", which specifies the interpretation of predicates at every world over the local domain with appropriate arity.
\end{definition}

For a given model $\M$ we denote $\W^{\M},\R^{\M},$ "etc", to indicate the corresponding components. We simply use $\W,\R,$ "etc", when $\M$ is clear from the context. We write $w\to v$ to mean $(w,v)\in \R$ and $\D_w$ instead of $\live(w)$ for brevity.

To evaluate formulas, we need an assignment function for variables. For a given model $\M$, an assignment  function $\sigma: \Var\mapsto \D$ is ""relevant"" at 
$w \in \W$ if $\sigma(x)\in \delta(w)$ for all $x\in \Var$.  For a given $\sigma$, we use $\sigma_{[x\mapsto d]}$ to denote a new assignment that agrees with $\sigma$ for all variables except $x$ which is mapped to $d$.

\begin{definition}[$\FOML$ semantics]
\label{def: FOML semantics}
Given an $\FOML$ model $\M = (\W, \D, \live, \R, \val)$ and $w \in \W$, and $\sigma$ relevant at $w$, for all $\FOML$ formulas $\alpha$
define $\M,w,\sigma \models \alpha$ inductively as follows:
{\small
$$\begin{array}{lcl}
\M, w, \sigma\models P(x_1,\ldots,x_n) &\Leftrightarrow & (\sigma(x_1), \ldots, \sigma(x_n))\in \val(w,P)  \\  
\M, w, \sigma\models \neg\alpha &\Leftrightarrow&   \M, w, \sigma\not\models \alpha \\ 
\M, w, \sigma\models \alpha\land \beta &\Leftrightarrow&  \M, w, \sigma\models \alpha \text{ and }\\ 
&& \M, w, \sigma\models \beta \\ 
\M, w, \sigma\models \exists x \alpha &\Leftrightarrow& \text{there is some $d\in \live(w)$ such }\\&&\text{that }\M, w, \sigma_{[x\mapsto d]}\models \alpha \\
\M, w, \sigma\models \Box  \alpha &\Leftrightarrow&  \text{for every } u\in \W \text{ if $w\rightarrow u$}\\&& \text{then }  \M, u, \sigma\models\alpha \\
\end{array}$$
}
\end{definition}
\AP
In the sequel, we will only talk about the "relevant" $\sigma$. Also, while evaluating $\alpha$, it is enough to consider $\sigma$ to be a partial function that gives an assignment for the free variables of $\alpha$. We write $\M,w\models \alpha(a)$ to mean $\M,w,[x\mapsto a] \models \alpha(x)$. 

\AP
A formula $\alpha$ is ""satisfiable"" if there is some "increasing domain model" $\M$ and $w\in \W$ and some assignment $\sigma$ "relevant" at $w$  such that $\M,w,\sigma \models \alpha$.  A formula $\alpha$ is ""valid"" if $\neg \alpha$ is not "satisfiable". A set of formulas $\Gamma$ is "satisfiable" if there is some $\M$ and $w\in \W$ and a $\sigma$ such that for every $\phi\in \Gamma$ we have $\M,w,\sigma \models \phi$. The "satisfiability problem" is the following: given a formula $\phi$, is $\phi$ "satisfiable"?

\AP
For a given formula $\phi$, we denote $|\phi|$ to be the length of the formula which counts the number of symbols occurring in $\phi$. The size of an "increasing domain model" $\M$ is defined as $|\M| = |\W|+|\D|$. 

\AP
Any subset of formulas $L$ of \FOML is called a fragment. We say that $L$ satisfies ""Finite model property"" if there is some computable function $f: \nat \to \nat$ such that for every formula $\phi\in L$,  $\phi$ is "satisfiable" iff $\phi$ is "satisfiable" in a model $\M$ such that $|\M| \le f(|\phi|)$. Intuitively, "finite model property" says that if $\phi$ is "satisfiable" then there is a `small' model $\M$ in which $\phi$ is "satisfiable".

\begin{theorem}[Folklore]
Let $L$ be a fragment of $\FOML$ such that $L$ satisfies "finite model property". Then the "satisfiability problem" for $L$ is decidable.
\end{theorem}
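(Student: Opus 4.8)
The plan is to show that for a fragment $L$ with the finite model property, the satisfiability problem is semi-decidable from two directions simultaneously, and these together yield a decision procedure. Let $f$ be the computable function witnessing the finite model property for $L$. Given an input formula $\phi \in L$, the algorithm runs two searches in parallel (by dovetailing).

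\textbf{The "satisfiable" search.} First I would enumerate all "increasing domain models" $\M$ together with a world $w\in\W$ and assignment $\sigma$ relevant at $w$ such that $|\M| \le f(|\phi|)$. Since $f$ is computable, the bound $f(|\phi|)$ can be computed from $\phi$; since there are only finitely many models up to isomorphism with $|\M| = |\W| + |\D| \le f(|\phi|)$ on a fixed finite vocabulary (only the predicates occurring in $\phi$ are relevant for evaluating $\phi$, and only the finitely many subformulas' free variables matter for $\sigma$), this is a finite enumeration. For each such $(\M, w, \sigma)$ one checks whether $\M, w, \sigma \models \phi$; this is decidable because $\M$ is finite and the semantics in \Cref{def: FOML semantics} is a straightforward structural recursion over the finitely many subformulas, with the $\exists$ and $\Box$ clauses ranging over the finite sets $\live(w)$ and $\W$ respectively. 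If some model in this list satisfies $\phi$, output ``satisfiable''.

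\textbf{The "unsatisfiable" search.} In parallel, I would enumerate all finite models of size at most $f(|\phi|)$ and check (again decidably, as above) whether \emph{none} of them satisfies $\phi$. If this exhaustive finite check completes with no satisfying model found, then by the finite model property $\phi$ is not satisfiable at all, so output ``unsatisfiable''.

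\textbf{Why this terminates and is correct.} Exactly one of the two outcomes occurs: by the finite model property, $\phi$ is satisfiable iff it is satisfiable in some model of size $\le f(|\phi|)$. So the finite enumeration either finds a witness (in which case the first branch answers ``satisfiable'' and the answer is correct by soundness of the check) or exhausts without a witness (in which case the second branch answers ``unsatisfiable'', correct by the finite model property). In fact both branches scan the \emph{same} finite list, so one need not even dovetail: the whole thing is a single terminating loop over the finitely many small models, and the only subtlety worth spelling out is that evaluating $\M, w, \sigma \models \psi$ for finite $\M$ and $\psi \in \SF[\phi]$ is effective. I do not expect any genuine obstacle here; the one point requiring a sentence of care is the observation that, for the purpose of deciding $\M,w,\sigma\models\phi$, it suffices to fix the vocabulary to the predicates appearing in $\phi$ and $\sigma$ to the free variables of $\phi$, so that ``all models of size $\le f(|\phi|)$'' really is a finite set modulo isomorphism.
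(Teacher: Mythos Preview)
The paper states this theorem as folklore and does not supply a proof, so there is nothing to compare against. Your argument is correct and is exactly the standard one: compute the bound $f(|\phi|)$, enumerate the finitely many (up to isomorphism) models on the finite vocabulary of $\phi$ of size at most $f(|\phi|)$, and model-check each; by the finite model property this finite search is sound and complete. The only stylistic remark is that the initial framing as two dovetailed searches is unnecessary, as you yourself observe at the end---a single exhaustive loop over the bounded models already decides the question.
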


\subsection{Bundled fragments}


\begin{definition}[Bundled-$\FOML$ syntax]
\label{def: bundled-FOML syntax}
The ""bundled fragment"" of $\FOML$ is the set of all formulas constructed by the following syntax:

\noindent
{ \small
$$\alpha ::= P(x_1,\ldots,x_n)  \mid \neg \alpha \mid \alpha \land \alpha \mid \Box \alpha \mid  \forall x \Box  \alpha \mid \exists x \Box  \alpha\mid\Box\forall x\alpha \mid \Box\exists x \alpha $$
}

where $P\in \Ps$ has arity $n$ and $x,x_1,\ldots, x_n \in \Var$.
\end{definition}

Note that $\Box \alpha$ can be defined using any one of the bundled operators where the corresponding quantifier is applied to a variable that does not occur in $\alpha$. However, we retain $\Box\alpha$ in the syntax for technical convenience.  The dual operators of $\forall x \Box  \alpha,~ \exists x \Box  \alpha,~\Box\forall x\alpha$ and $\Box\exists x \alpha$ are given by $\exists x\Diamond \alpha,~\forall x\Diamond \alpha,~\Diamond\exists x \alpha$ and $\Diamond\forall{x} \alpha$ "respectively".

\medskip
By allowing combinations of various bundled operators\footnote{\cite{BundledJournal,BundledMFCS} use the nomenclature like $\ABBE, \ABBABE$ "etc" to name each of these combinations. However since we will be interested in only one of the fragments (\EBBE in the naming convention of \cite{BundledJournal}), we do not use this naming convention and refer to the fragment directly as $\exists\Box + \Box\exists$}, we get logics of different expressive power. For instance if we allow only $\forall\Box$ and $\exists \Box$ (which we denote by the fragment $\forall\Box+\exists\Box$) then we can have formulas like $\forall x \Box (P(x) \land \forall y \Diamond Q(x,y))$ (note that $\forall\Diamond$ is the dual of $\exists\Box$) but we cannot write $\Diamond\forall x P(x)$.

\medskip
Each such combination leads to (un)decidability depending on the expressive power that we get (refer \Cref{classification}). At first glance, the classification in \Cref{classification} seems to have no structure. However, there is a method to the madness\footnote{For more details, refer \cite[Section $2.3$]{BundledJournal}}. If a fragment can express both $\forall x\exists y\Box  \alpha$ and $\forall x \Delta \forall y \Delta \forall z  \alpha$ (where $\Delta$ is either a $\Box$ or is empty). For instance, in $\exists\Box + \Box\forall$ we can express $\forall x\exists y\Box  \alpha$ and $\forall x \Delta \forall y \Delta \forall z  \alpha$  as $\Diamond \forall x \exists y \Box \alpha$ and $\Box \forall x \Box \forall y \Box \forall z \alpha$ "respectively". Similarly, in $\forall\Box+\exists\Box+\Box\exists$ we can express $\forall x\exists y\Box  \alpha$ and $\forall x \Delta \forall y \Delta \forall z  \alpha$ as $\Diamond \exists x \forall y \Box \alpha$ and $\forall x \Box \forall y \Box \forall z \Box \alpha$ "respectively". 

If a fragment cannot express  $\forall x \exists y \Box \alpha$ then the fragment satisfies "finite model property" (and is hence decidable). For instance in $\forall\Box+\Box\forall + \Box\exists$ there is no way to express $\forall x \exists y\Box \alpha$ without having some modal operator between the two quantifiers. Finally if a fragment can express $\forall x \exists y\Box \alpha$ but not $\forall x \Delta \forall y\Delta \forall z \Delta \alpha$ then such fragments do not have "finite model property" but their decidability status is open. $\exists\Box + \Box\exists$ falls under this category since we can write $\Diamond\forall x \exists y \Box \alpha$ but there is no way to have three $\forall$s without having a $\Diamond$ somewhere in between.

\AP
The goal of this paper is to prove the decidability of $\intro*\ourEBBE$ which was left as an open problem in \cite{BundledJournal,BundledMFCS}.

\section{\texorpdfstring{$\exists\Box+\Box\exists$}{pdf} fragment}
\label{sec-ebbe}
The $\exists\Box+\Box\exists$ fragment allows bundles of the form $\exists \Box \alpha$ and $\Box \forall x \alpha$ and their duals $\forall x \Diamond \alpha$ and $\Diamond \exists x \alpha$ "respectively". For instance $\exists x\Box\Big(\Diamond \forall y \exists z \Box P(x,y)\Big)$ is a formula in $\exists\Box+\Box\exists$. Consequently, we can express nested $\forall \exists$ that applies to the same "local domain". It is the $\forall \exists$ nesting that needs to be handled carefully since every witness that is generated will also need a witness. On the other hand, if we want to have nested $\forall \forall \forall$ or $\forall \exists \forall$, there is no way of doing this without having a $\Diamond$ somewhere in between the quantifiers. This is the key property which we will use to prove the decidability of the $\exists\Box+\Box\exists$ fragment.

As proved in \cite{BundledJournal}, $\exists\Box+\Box\exists$ does not satisfy "Finite model property". We now give a formula, adapted from \cite{BundledJournal}, that is satisfiable but only in infinite models. Let $P$ be a binary predicate (We write $Pxy$ to mean $P(x,y)$ to avoid excessive bracketing).

\AP
\begin{tabular}{l r l l l}
$\intro*\phiOne:=$&$  \Diamond\forall x \Big[ $&$~\exists y \Box\Box Pxy~\land \Box\Box \neg Pxx ~\land$\\ 
&&$\Diamond \forall y\Big(~\big[\Diamond Pxy\leftrightarrow \Box Pxy~\big]~\land$\\
&&$\quad\quad\Diamond \forall z\big[ \big(Pxy \land Pyz\big) \implies \big(Pxz\big) \big]\Big)~\Big]$
\end{tabular}


\begin{theorem}[{\cite[Theorem $4.1$]{BundledJournal}}]
\label{thm-EBBE has no FMP}
The formula $\phiOne$ is satisfiable in a model with infinite $\D$. Moreover, for all increasing domain model $\M$ and $w\in \W$ if $\M,w\models \phiOne$ then there is some $w\to u$ such that $\live(u)$ is infinite.
\end{theorem}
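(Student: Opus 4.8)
The plan is to handle the two assertions separately, and for the more interesting (``moreover'') part to build an infinite $P$-chain inside the local domain of the world that witnesses the outer $\Diamond$ of $\phiOne$.

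For satisfiability in an infinite model I would exhibit an explicit witness: take $\D=\nat$, worlds $w_0\to w_1\to w_2\to w_3$ forming a simple chain with $w_3$ a dead end, all local domains equal to $\nat$, and interpret $P$ at $w_3$ as the strict order $<$ on $\nat$ (and, say, as the empty relation elsewhere, where it is never inspected). Then $\M,w_0\models\phiOne$: the outer $\Diamond$ is witnessed by $w_1$; at $w_1$, for each element $n$ the bundle $\exists y\,\Box\Box Pxy$ is witnessed by $y:=n+1$, since the only world two $\R$-steps from $w_1$ is $w_3$, where $n<n+1$; the conjunct $\Box\Box\neg Pxx$ holds because $n\not<n$ at $w_3$; and $\Diamond\forall y(\cdots)$ is witnessed by $w_2$, at which $\Diamond Pxy\leftrightarrow\Box Pxy$ holds trivially ($w_2$ has the unique successor $w_3$) and $\Diamond\forall z[(Pxy\land Pyz)\implies Pxz]$ is witnessed again by $w_3$, where $<$ is transitive. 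Since $\phiOne$ is a sentence the choice of assignment is immaterial, so $\D=\nat$ gives the infinite model.

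For the main assertion, suppose $\M,w\models\phiOne$ and let $u$ be the world with $w\to u$ and $\M,u\models\forall x[\cdots]$ produced by the outer $\Diamond$; I would show $\live(u)$ is already infinite. Build $a_0,a_1,a_2,\ldots\in\live(u)$: let $a_0$ be arbitrary, and given $a_n$, instantiate $\exists y\,\Box\Box Pxy$ at $x:=a_n$ to obtain $a_{n+1}\in\live(u)$ with $Pa_na_{n+1}$ true at every world two $\R$-steps from $u$. Also, for each $n$, instantiate $\Diamond\forall y(\cdots)$ at $x:=a_n$ to get a world $v_n$ with $u\to v_n$ such that for every $y\in\live(v_n)\supseteq\live(u)$ both $\Diamond Pa_ny\leftrightarrow\Box Pa_ny$ holds at $v_n$ and $\Diamond\forall z[(Pa_ny\land Pyz)\implies Pa_nz]$ holds at $v_n$ (in particular $v_n$ has a successor). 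The crux is the claim, proved by induction on $m-n$: for all $m>n$, $Pa_na_m$ holds at every $\R$-successor of $v_n$. The base case $m=n+1$ is immediate, as successors of $v_n$ are two $\R$-steps from $u$. For the step, instantiating the inner bundle at $v_n$ with $y:=a_{m-1}$ and then $z:=a_m$ yields a successor $v'$ of $v_n$ at which $(Pa_na_{m-1}\land Pa_{m-1}a_m)\implies Pa_na_m$; by the induction hypothesis $Pa_na_{m-1}$ holds at $v'$, and $Pa_{m-1}a_m$ holds at $v'$ since $v'$ is two $\R$-steps from $u$, so $Pa_na_m$ holds at $v'$; hence $\Diamond Pa_na_m$ holds at $v_n$, and instantiating the coherence equivalence at $y:=a_m$ upgrades this to $\Box Pa_na_m$ at $v_n$, i.e.\ $Pa_na_m$ at every successor of $v_n$. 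Finally, since $\M,u\models\Box\Box\neg Pxx$, $\neg Pa_na_n$ holds at every world two $\R$-steps from $u$; picking any successor of $v_n$ we find there both $Pa_na_m$ and $\neg Pa_na_n$, so $a_m\neq a_n$. Thus the $a_i$ are pairwise distinct, $\live(u)$ is infinite, and $w\to u$ as required.

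The main obstacle is exactly this inductive claim. In $\phiOne$ transitivity is asserted only world-locally --- for each $x$ and each $y$ there is a separate witnessing world carrying just the $(x,y,\cdot)$-instance of transitivity --- so one cannot directly compose $Pa_0a_1,Pa_1a_2,\ldots$ into $Pa_0a_m$ at a single world. The conjunct $\Diamond Pxy\leftrightarrow\Box Pxy$ is precisely the device that lets each newly derived edge $Pa_na_m$ be propagated from the one world where transitivity produced it to all successors of $v_n$, making it available as a hypothesis for the next transitivity step. Beyond that, the routine care needed is bookkeeping of $\R$-depths (the two $\Box$'s and the two nested $\Diamond\forall$'s all bottom out at depth $2$ from $u$) and using the increasing-domain condition so that every $a_i$ remains in the local domain of the deeper worlds $v_n$ and their successors where it is used for instantiation.
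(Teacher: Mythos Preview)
Your proposal is correct and follows essentially the same approach as the paper's sketch: interpret $P$ as a strict order, use the first conjunct to produce successors, the second for irreflexivity, the last for transitivity, and the $\Diamond Pxy\leftrightarrow\Box Pxy$ conjunct to propagate edges uniformly across successors so that transitivity can be iterated. The paper only gives this high-level outline and defers the details to the cited journal article; your inductive claim about $Pa_na_m$ holding at every successor of $v_n$ is exactly the careful formalization of that outline, and your explicit model for satisfiability is a natural concrete witness.
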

\begin{proofsketch}
The main idea is to interpret $Pxy$ as $y$ is a successor of $x$ and hence the first conjunct asserts that every $x$ has a successor $y$ and the second conjunct asserts that the binary predicate $P$ is irreflexive. The last conjunct asserts that $P$ is transitive, and the one before ensures that the relation $Pxy$ holds uniformly across all branches. Together, the formula encodes a linear order without a maximal element. Detailed proof can be found in \cite[Theorem $4.1$]{BundledJournal}.
\end{proofsketch}

\begin{toappendix}
\label{subsec-illustration}

In this section we illustrate the key ideas of our tableau construction for the formula $\phiOne$.
From \autoref{thm-EBBE has no FMP} it is clear that $\phiOne$ is satisfiable only in infinite models. We now explain how to get a finite witness for $\phiOne$ from which we can extract a model for $\phiOne$. To handle the subformulas, we define short notations in \Cref{fig-short-notations}.

\AP
\begin{figure}
\begin{center}
\begin{tabular}{|l l l|}
\hline
$\phiOne$&$:=$&$\Diamond\psiOne$\\
\hline

$\intro*\psiOne$&$:=$&$\forall x~ \psiTwo$\\
\hline 

$\intro*\psiTwo(x)$&$:=$&$\psiThree(x)~\land \psiFour(x) ~\land \psiSeven(x)$\\
\hline 
$\intro*\psiThree(x)$&$:=$&$\exists y~ \psiFive(x,y)$\\ 
\hline
$\intro*\psiFour(x)$&$:=$&$\Box\psiSix(x)$\\ 
\hline
$\intro*\psiFive(x,y)$&$:=$&$\Box\psiTwelve(x,y)$\\ 
\hline
$\intro*\psiSix(x)$&$:=$&$\Box \neg Pxx$\\ 
\hline
$\intro*\psiSeven(x)$&$:=$&$\Diamond \psiEight(x)$\\
\hline
$\intro*\psiEight(x)$&$:=$&$\forall y~\psiNine(x,y)$\\
\hline
$\intro*\psiNine(x,y)$&$:=$&$\psiTen(x,y)~\land~ \psiThirteen(x,y)$\\
\hline
$\intro*\psiTen(x,y)$&$:=$&$\psiEleven(x,y)~\leftrightarrow~ \psiTwelve(x,y)$\\
\hline
$\intro*\psiEleven(x,y)$&$:=$&$\Diamond Pxy$\\
\hline
$\intro*\psiTwelve(x,y)$&$:=$&$\Box Pxy$\\
\hline
$\intro*\psiThirteen(x,y)$&$:=$&$\Diamond \psiFourteen(x,y)$\\
\hline
$\intro*\psiFourteen(x,y)$&$:=$&$\forall z~\psiFifteen(x,y,z)$\\
\hline
$\intro*\psiFifteen(x,y,z)$&$:=$&$ \big(Pxy \land Pyz\big) \implies \big(Pxz\big)$\\
\hline
\end{tabular}
\end{center}
 \caption{Short notations for all subformulas of $\phiOne$. The variables in brackets indicate the free variables of the corresponding subformulas.}
\label{fig-short-notations}
\end{figure}

Suppose we want to build a model that satisfies $\phiOne$. Let us start with a "world" $r$ where $\phiOne$ is to be satisfied. Since "local domain" should be non-empty, we start with $\live(r) = d_0$. We denote this by $(r:\{\phiOne\}, \{d_0\})$ to mean that we want the set of formulas $\{\phiOne\}$ to be satisfied at the world $r$ which has $\{d_0\}$ as its "local domain" (we shall add more later, if needed).

Since the leading operator of $\phiOne$ is a $\Diamond$, we create a new world $u$ such that $r\rightarrow u$ and we have $\psiOne$ to be satisfied at $u$. Since $d_0\in \D_r$ we also need to have $d_0\in \D_u$. Hence we want $(u:\{\psiOne\},\{d_0\})$.

\sloppy Now $\psiOne$ is a "nested $\forall$ formula" (which is formally defined in \Cref{sec-tableaux}). Intuitively, a "nested $\forall$ formula" is of the form $\forall x~ \psi'$ where $\psi'$ contains some $\exists$ within the scope of $\forall x$ and therefore, we have a $\forall\exists$ alternation. Such formulas are typically responsible for forcing the infinite domain. 

\medskip
Note that from the syntax of $\ourEBBE$, inside the scope of a "nested $\forall$ formula", we will only have conjunctions/disjunctions/negations of formulas of the form $\exists z \Box \alpha$ or $\forall x \Diamond \alpha$ or "modules". Moreover, a "nested $\forall$ formula" $\forall x~\psi'$ can occur only inside a larger formula of the form $\Diamond\forall x~\psi'$. In particular, it follows from the syntax of $\ourEBBE$ that we can never have $\Box\forall x \psi'$ where $\forall x~\psi'$ is a "nested $\forall$ formula". Hence, we can ensure that at every "world", we will be trying to satisfy at most one "nested $\forall$ formula".

If we strip off the outermost $\forall$ quantifier from the "nested $\forall$ formula" $\forall x~\alpha$, we get a formula $\alpha$ with $x$ as a free variable. Note that there could be other free variables also in $\alpha$ which are inherited from before. But at the current world, $x$ is special since it is $\forall$ quantified. The assignment for all other free variables would have been fixed in the ancestor worlds. We denote $\alpha(S,x)$ to mean that $\FV[\alpha] = S\cup \{x\}$ where $S$ is the set of variables that have been fixed in the ancestor worlds. An "atom" "wrt" $\alpha$ and $x$ (defined in \Cref{sec-tableaux}) is a set of subformulas of $\alpha(S,x)$ which are to be satisfied in the current world to ensure that $\alpha$ itself is satisfied.
For instance, if $\forall x \alpha$ is of the form $\forall x~\Big(\big(\exists y\Box \beta(x,y,u) \lor \Diamond \gamma(x,u)~\big) \land(\forall z\Diamond \delta(x,z)~\big)\Big)$ (where $u$ is a free variable of $\alpha$ that has been quantified earlier), then there are two possible "atoms" "wrt" $\alpha$ and $x$: $\{ \exists y\Box \beta(x,y,u) , \forall z\Diamond \delta(x,z)\}$ and $\{\Diamond \gamma(x,u), \forall z\Diamond \delta(x,z)\}$. Note that if we want to satisfy a "nested $\forall$ formula" $\forall x \alpha$ then for every "local domain" element $d$ there is some "atom" "wrt" $\alpha$ and $x$ that should be satisfied when $x$ is assigned to $d$.
\bigskip

\sloppy Back to the example, we want to satisfy $(u:\{\psiOne\},\{d_0\})$. Now $\psiOne$  is of the form $\forall x \psiTwo(x)$ which is a "nested $\forall$ formula" and by stripping the $\forall$, we get $\psiTwo(x)$ where $\psiTwo(x)$ is of the form $\psiThree(x)\land \psiFour(x)\land \psiSeven(x)$. Note that in this case, there is only one possible "atom" "wrt" $\psi_2(x)$and $x$ given by $\{\psiThree(x), \psiFour(x), \psiSeven(x)\}$. Thus, for every "local domain" $d$ element, we need to ensure that $\{\psiThree(d), \psiFour(d), \psiSeven(d)\} $ is satisfied at $u$.
Since we currently have $d_0$ as the current "local domain" at $u$ we first want to satisfy satisfy $\{\psiThree(d_0), \psiFour(d_0), \psiSeven(d_0)\}$. 

\sloppy Now, $\psiThree(d_0)$ is of the form $\exists y~\psiFive(d_0,y)$, so it needs a witness ($\psiFour(d_0)$ and $\psiSeven(d_0)$ do not need any witnesses and so we will deal with them later). Let us add a fresh domain element $d_1$ which acts as a witness to $d_0$. We will remember that $d_1$ is the $y$-witness of $d_0$ in the form of a tree $T$, where $d_0$ is the root and $d_1$ is the child of $d_0$ indicating that witness relationship. Now $d_1$ has been added to the "local domain" of $u$ and hence we need to satisfy $\{\psiThree(d_1), \psiFour(d_1), \psiSeven(d_1)\}$ at $u$. 

Again $\psiThree(d_1)$ is of the form $\exists y~\psiFive(d_1,y)$, so we pick a new $y$-witness for $d_1$ and call it $d_2$ and add it to the "local domain" of $u$. We also make $d_2$ as the child of $d_1$ in $T$ to remember the witness. Again, we need to satisfy $\psiTwo(d_2)$ for which we need to satisfy $\{ \psiThree(d_2),\psiFour(d_2),\psiSeven(d_2)\}$ at $u$.
\medskip

\sloppy At this point, we again need to satisfy $\{\psiThree(d_2), \psiFour(d_2), \psiSeven(d_2)\}$. But notice that this is the same as the set of formulas that we need to satisfy for $x\to d_0$ and $x\to d_1$ ("ie" the "atom" "wrt" $\psiTwo$ for $d_0$ is the same as the "atom" "wrt" $\psiTwo$ for  $d_1$ which is again the same as the "atom" "wrt" $\psiTwo$ for $d_2$). Hence, we stop creating a new witness for $d_2$. More generally, we stop generating a witness for $d$ if we find two "local domain" elements $c_1$ and $c_2$ with the same "atom" "wrt" the "nested $\forall$ formula" at hand where both $c_1$ and $c_2$  are the ancestors of $d$ in the tree that we are maintaining. In this case, we stopped at $d_2$ because in the tree $T$ we have $d_0\rightarrow d_1\rightarrow d_2$ where $d_0$ and $d_1$ have the same "atom" "wrt" $\psiTwo$ and $x$. 


\sloppy The key idea is that now we will not try to satisfy $\{\psiThree(d_2), \psiFour(d_2), \psiSeven(d_2)\}$. Instead, we will only have $(u: \{\psiThree(d_0), \psiFour(d_0), \psiSeven(d_0), \psiThree(d_1), \psiFour(d_1), \psiSeven(d_1)\},$ $\{d_0,d_1,d_2\},~ T)$ where $T$ is a tree with $d_0$ as root, $d_1$ is the child of $d_0$ and $d_2$ is the child of $d_1$ (indicating that $d_1$ is witness for $d_0$ and $d_2$ is the witness for $d_1$). Note that we have also removed $\psiOne$ from the set of formulas that we need to satisfy. The hope is that if we know how to satisfy $\{\psiThree(d_0), \psiFour(d_0), \psiSeven(d_0), \psiThree(d_1), \psiFour(d_1), \psiSeven(d_1)\}$, that information is enough to prove that we can generate new witnesses of $d_2$ and thereafter to satisfy and ensure that $\psiOne$ is satisfied at $u$.

\medskip
\sloppy Now, $\psiThree(d_0)$ and $\psiThree(d_1)$ are of the form $\exists y~ \psiFive(d_0,y)$ and $\exists y~ \psiFive(d_0,y)$  "respectively" and $d_1,d_2$ are the witnesses chosen exactly to satisfy these formulas "respectively".
Hence, $(u: \{\psiFive(d_0,d_1), \psiFour(d_0), \psiSeven(d_0), \psiFive(d_1,d_2), \psiFour(d_1), \psiSeven(d_1)\},$ $ \{d_0,d_1,d_2\},~~ T)$. Expanding the formulas for one step, we get $(u: \{\Box\psiTwelve(d_0,d_1), \Box\psiSix(d_0), \Diamond\psiEight(d_0), \Box\psiTwelve(d_1,d_2), $ $\Box\psiSix(d_1), \Diamond\psiEight(d_1)\},~~ \{d_0,d_1,d_2\},~~ T)$.\\
At this point, all formulas that we have at hand have leading "modal operators". Hence, to satisfy $\Diamond\psiEight(d_0)$ and $\Diamond\psiEight(d_1)$ we need to create a new world $v_0$ and $v_1$  such that $u\rightarrow v_0$ and $u\rightarrow v_1$ where we need to satisfy $\{ \psiTwelve(d_0,d_1), \psiSix(d_0), \psiEight(d_0), \psiTwelve(d_1,d_2), \psiSix(d_1)\}$ with $\{d_0,d_1,d_2\}$ as the "local domain" at $v_0$ and satisfy $\{ \psiTwelve(d_0,d_1), \psiSix(d_0), \psiTwelve(d_1,d_2), \psiSix(d_1), \psiEight(d_2)\}$ at $v_1$. Along standard lines, $\Box$ formulas are inherited in all the successor worlds.

\sloppy At this point, the fact that $d_1$ and $d_2$ were used as a witnesses for $d_0$ and $d_1$ "respectively" at $u$ is not relevant anymore. So it is enough to only remember that $d_0,d_1$ and $d_2$ as the "local domain" of $v_0$ and $v_1$ and forget the tree $T$. Hence we have:
\begin{itemize}
    \item $(v_0: \{\psiTwelve(d_0,d_1), \psiSix(d_0), \psiEight(d_0), \psiTwelve(d_1,d_2), \psiSix(d_1)\},$ $\{d_0,d_1,d_2\},\notInit)$
    \item $(v_1: \{ \psiTwelve(d_0,d_1), \psiSix(d_0), \psiTwelve(d_1,d_2), \psiSix(d_1), \psiEight(d_2)\},$ $\{d_0,d_1,d_2\}, \notInit)$ 
\end{itemize}
where $\notInit$ indicates that there is no tree to be remembered. Retrospectively, we have $(r: \{\phiOne\}, \{d_0\}, \notInit)$ at the start.

\medskip
\sloppy  First let us consider $v_0$. Note that $\psiEight(d_0)$ is of the form $\forall y~ \psiNine(d_0,y)$. However, there are no $\exists$ quantifiers in the immediate scope of $\forall y$. Hence $\psiEight$ is not a "nested $\forall$ formula", and to satisfy $\psiEight(d_0)$, we only have to satisfy $\psiNine(d_0,d_0),\psiNine(d_0,d_1)$ and $\psiNine(d_0,d_2)$. 
Now, $\psiNine(x,y)$ is $\psiTen(x,y) \land \psiThirteen(x,y)$. So to satisfy $\{\psiNine(d_0d_0), \psiNine(d_0d_1), \psiNine(d_0,d_2)\}$ we need to satisfy $\{\psiTen(d_0d_i), \psiThirteen(d_0d_i)\mid 0\le i \le 2\}$. But then, $\psiTen(x,y)$ is equivalent to $\Big(\psiEleven(x,y) \land \psiTwelve(x,y)\Big) \lor \Big(\neg \psiEleven(x,y) \land \neg \psiTwelve(x,y) \Big)$. 
We need to choose one disjunction from each of $\psiTen(d_0,d_0), \psiTen(d_0,d_1)$ and $\psiTen(d_0,d_2)$. So we pick as follows:
\begin{itemize}
    \item $\neg \psiEleven(d_0,d_0), \neg \psiTwelve(d_0,d_0)$ for $\psiTen(d_0,d_0)$
    \item $\psiEleven(d_0,d_1), \psiTwelve(d_0,d_1)$ for $\psiTen(d_0,d_1)$
    \item $\psiEleven(d_0,d_2), \psiTwelve(d_0,d_2)$ for $\psiTen(d_0,d_2)$
\end{itemize} 
Hence we have $(v_0: \{\psiTwelve(d_0,d_1), \psiSix(d_0), \psiTwelve(d_1,d_2),$ $ \psiSix(d_1), \neg \psiEleven(d_0,d_0), \neg \psiTwelve(d_0,d_0), \psiEleven(d_0,d_1), \psiTwelve(d_0,d_1), $ $\psiEleven(d_0,d_2), \psiTwelve(d_0,d_2), \psiThirteen(d_0,d_0), \psiThirteen(d_0,d_1), \psiThirteen(d_0,d_2) \},$ $ \{d_0,d_1,d_2\},\notInit)$.

Expanding each of the formula, we get $(v_0: \{\Box P(d_0d_1), \Box \neg P(d_0d_0),\Box P(d_1d_2), \Box \neg P(d_1d_1), \neg\Diamond P(d_0d_0),$ $ \neg \Box P(d_0d_0),  \Diamond P(d_0d_1), \Box P(d_0d_2), \Diamond P(d_0d_2), \Diamond \psiFourteen(d_0,d_0),$ $ \Diamond \psiFourteen(d_0,d_1), \Diamond \psiFourteen(d_0,d_2)  \},~ \{d_0,d_1,d_2\}, \notInit)$.

Rewriting $\neg \Box P(d_0d_0)$ and $\neg \Diamond P(d_0,d_0)$ as $\Diamond \neg P(d_0d_0)$ and $\Box \neg P(d_0d_0)$ "respectively", we get $(v_0: \{\Box P(d_0d_1),$ $ \Box \neg P(d_0d_0), \Box P(d_1d_2), \Box \neg P(d_1d_1), \Diamond \neg P(d_0d_0), \Diamond  P(d_0d_1),$ $ \Box P(d_0d_2), \Diamond P(d_0d_2),  \Diamond \psiFourteen(d_0,d_0), \Diamond \psiFourteen(d_0,d_1), \Diamond \psiFourteen(d_0,d_2)  \},$ $\{d_0,d_1,d_2\}, \notInit)$.
Now all formulas are modal and we have {\em six} $\Diamond$ formulas and so we create {\em six} successors for $v_0$: 
\begin{itemize}
    \item $(w_{0}:~\{ \neg P(d_0d_0), P(d_0d_1), P(d_1d_2), \neg P(d_1d_1), P(d_0,d_2)\},$ $  \{d_0,d_1,d_2\},~ \notInit)$~~  [created for $\Diamond \neg P(d_0,d_0)$]
    \item $(w_{1}:~\{ \neg P(d_0d_0), P(d_0d_1), P(d_1d_2), \neg P(d_1d_1), P(d_0,d_2)\},$ $  ~~\{d_0,d_1,d_2\},~ \notInit)$~~  [created for $\Diamond  P(d_1,d_2)$]
    \item $(w_{2}:~\{ \neg P(d_0d_0), P(d_0d_1), P(d_1d_2), \neg P(d_1d_1), P(d_0,d_2)\},$ $  ~~\{d_0,d_1,d_2\},~ \notInit)$~~  [created for $\Diamond  P(d_0,d_2)$]
    \item $(w_{3}: \{P(d_0d_1), \neg P(d_0d_0), P(d_1d_2), \neg P(d_1d_1), P(d_0d_2),$ $ \psiFourteen(d_0,d_0) \},  \{d_0,d_1,d_2\}, \notInit)$~  [created for $\Diamond  \psiFourteen(d_0,d_0)$]
    \item $(w_{4}: \{P(d_0d_1), \neg P(d_0d_0), P(d_1d_2), \neg P(d_1d_1), P(d_0d_2),$ $ \psiFourteen(d_0,d_1) \},  \{d_0,d_1,d_2\},\notInit)$~  [created for $\Diamond  \psiFourteen(d_0,d_1)$]
    \item $(w_{5}: \{P(d_0d_1), \neg P(d_0d_0), P(d_1d_2), \neg P(d_1d_1), P(d_0d_2),$ $ \psiFourteen(d_0,d_2) \},  \{d_0,d_1,d_2\},\notInit)$~  [created for $\Diamond  \psiFourteen(d_0,d_2)$]
\end{itemize}

Note that $w_{0}, w_{1}$ and $w_{2}$ contains a consistent set of "literals" ("ie" there is no "literal" and its negation that needs to be satisfied together). So the construction stops here for these "worlds".

\noindent
\sloppy Further, $\psiFourteen(x,y)$ is of the form $\forall z \psiFifteen(x,y,z)$ which is not a "nested $\forall$ formula", so there are no witnesses needed. Hence we only need to replace $z$ with each of $d_0,d_1$ and $d_2$ in each of $w_{3},w_{4}$ and $w_{5}$ appropriately. In particular, at $w_3$ we get $(w_{3}: \{P(d_0d_1), \neg P(d_0d_0), P(d_1d_2), \neg P(d_1d_1), P(d_0d_2),$ $ \psiFifteen(d_0,d_0,d_0),\psiFifteen(d_0,d_0,d_1), \psiFifteen(d_0,d_0,d_2) \},$ $\{d_0,d_1,d_2\}, \notInit)$.\\
\sloppy Writing $\psiFifteen$ as $\neg P(x,y) \lor \neg P(y,z) \lor P(x,z)$ and using the right choice for disjunction in each case, we get $(w_{3}: \{P(d_0d_1), \neg P(d_0d_0), P(d_1d_2), \neg P(d_1d_1), P(d_0d_2) \},$ $ \{d_0,d_1,d_2\}, \notInit)$ which is a consistent set of "literals".
Similarly we end up with $(w_{4}: \{P(d_0d_1), \neg P(d_0d_0),$ $ P(d_1d_2), \neg P(d_1d_1), P(d_0d_2), \neg P(d_1d_0) \}, \{d_0,d_1,d_2\}, \notInit)$ and $(w_{5}: \{P(d_0d_1), \neg P(d_0d_0), P(d_1d_2), \neg P(d_1d_1), P(d_0d_2),$ $ \neg P(d_2d_0), \neg P(d_2d_1) \}, \{d_0,d_1,d_2\}, \notInit)$ which are again consistent set of "literals".

\bigskip
A similar construction for $(v_1: \{ \psiTwelve(d_0,d_1), \psiSix(d_0),$ $ \psiTwelve(d_1,d_2), \psiSix(d_1), \psiEight(d_2)\},\notInit)$ will also lead to having leaves where the corresponding sets of formulas are consistent "literals". This is where the tableau construction ends. 

\bigskip

Now, from this tableau, we obtain a model in the canonical way. For each node of the form $(w: \Gamma, C, F)$, we create a "world" $w$ in the model with $C$ as its local domain and $(c,d)\in \val(w,P)$ iff $P(c,d)\in \Gamma$ (along standard lines, we would need many tableaux nodes to resolve $\land/\lor$ "etc", and hence there could be multiple nodes in the tableau with $w$, in which case we look at $\Gamma$ of the last node in the tableau with "world" label $\Gamma$. The construction is formally defined in \Cref{sec-completeness}). The model $\M_0$ obtained from the above constructed tableau for $\phiOne$ is given in \autoref{fig-tableau-example}. Note that one successor for each of $v_0$ and $v_1$ would have done the job, but the way we created the tableau forces us to have {\em six} successors each because of the number of $\Diamond$ formulas.

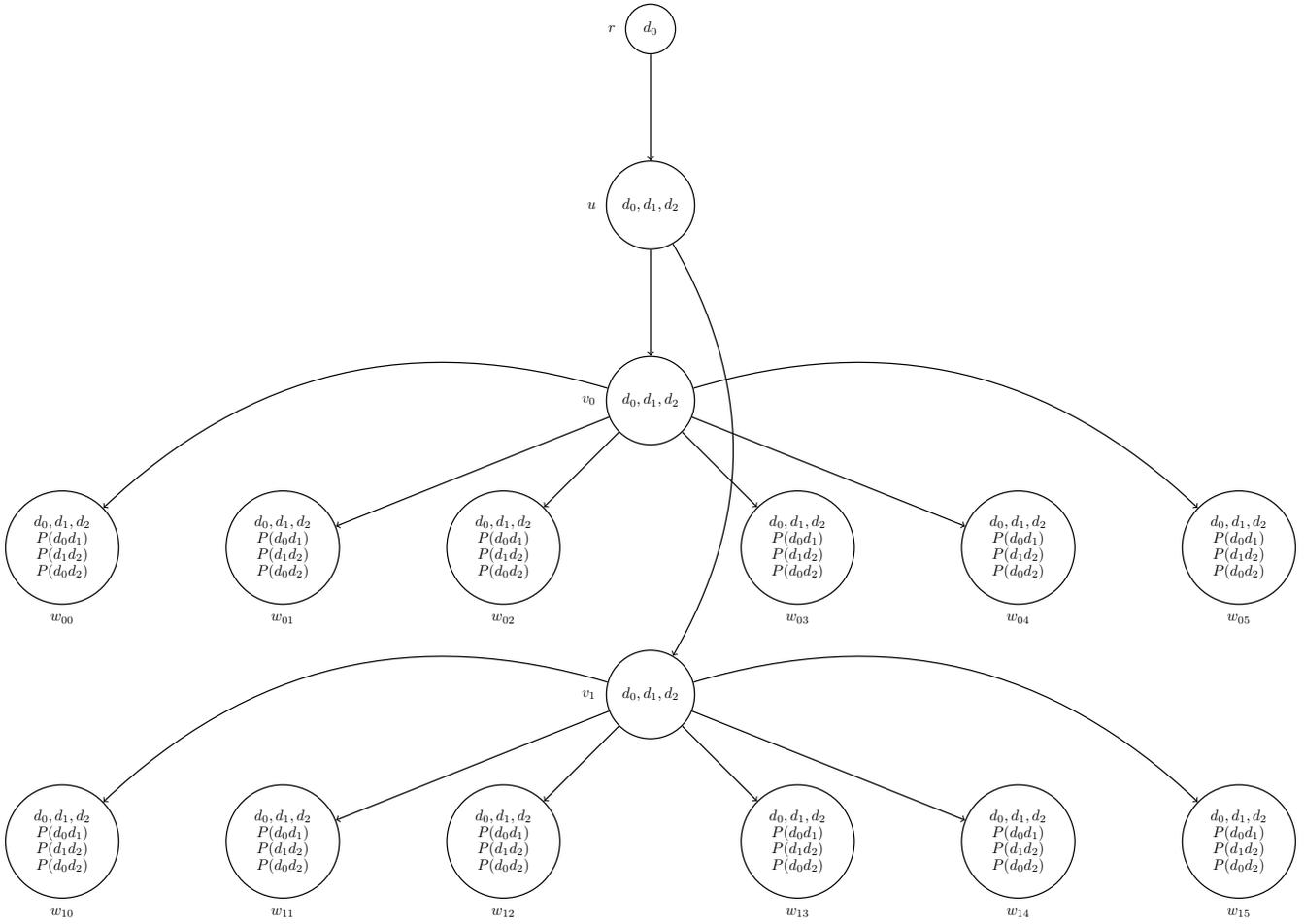
\begin{figure*}
    \centering
    \resizebox{\textwidth}{!}{
    \begin{tikzpicture}[->, node distance=2.5cm, thick, main/.style = {draw, circle, black}]

\node[main] (r) at (0,0) {\begin{tabular}{c} $d_0$ \\  \end{tabular}};
\node[main] (u) [below=of r] {\begin{tabular}{c} $d_0,d_1,d_2$ \\  \end{tabular}};
\node[main] (v0) [below=of u] {\begin{tabular}{c} $d_0,d_1,d_2$ \\ \end{tabular}};
\node[main] (w02) [below left=of v0] {\begin{tabular}{c} $d_0,d_1,d_2$ \\ $P(d_0d_1)$\\ $P(d_1d_2)$\\ $P(d_0d_2)$\end{tabular}};
\node[main] (w03) [below right=of v0] {\begin{tabular}{c} $d_0,d_1,d_2$ \\  $P(d_0d_1)$\\ $P(d_1d_2)$\\ $P(d_0d_2)$\end{tabular}};
\node[main] (w01) [     left=of w02] {\begin{tabular}{c} $d_0,d_1,d_2$ \\ $P(d_0d_1)$\\ $P(d_1d_2)$\\ $P(d_0d_2)$\end{tabular}};
\node[main] (w04) [    right=of w03] {\begin{tabular}{c} $d_0,d_1,d_2$ \\$P(d_0d_1)$\\ $P(d_1d_2)$\\ $P(d_0d_2)$ \end{tabular}};
\node[main] (w00) [     left=of w01] {\begin{tabular}{c} $d_0,d_1,d_2$ \\$P(d_0d_1)$\\ $P(d_1d_2)$\\ $P(d_0d_2)$ \end{tabular}};
\node[main] (w05) [    right=of w04] {\begin{tabular}{c} $d_0,d_1,d_2$ \\ $P(d_0d_1)$\\ $P(d_1d_2)$\\ $P(d_0d_2)$\end{tabular}};

\node[main] (v1) [below right=of w02] {\begin{tabular}{c} $d_0,d_1,d_2$ \\ \end{tabular}};
\node[main] (w12) [below left=of v1] {\begin{tabular}{c} $d_0,d_1,d_2$ \\ $P(d_0d_1)$\\ $P(d_1d_2)$\\ $P(d_0d_2)$\end{tabular}};
\node[main] (w13) [below right=of v1] {\begin{tabular}{c} $d_0,d_1,d_2$ \\ $P(d_0d_1)$\\ $P(d_1d_2)$\\ $P(d_0d_2)$\end{tabular}};
\node[main] (w11) [     left=of w12] {\begin{tabular}{c} $d_0,d_1,d_2$ \\$P(d_0d_1)$\\ $P(d_1d_2)$\\ $P(d_0d_2)$ \end{tabular}};
\node[main] (w14) [    right=of w13] {\begin{tabular}{c} $d_0,d_1,d_2$ \\ $P(d_0d_1)$\\ $P(d_1d_2)$\\ $P(d_0d_2)$\end{tabular}};
\node[main] (w10) [     left=of w11] {\begin{tabular}{c} $d_0,d_1,d_2$ \\$P(d_0d_1)$\\ $P(d_1d_2)$\\ $P(d_0d_2)$ \end{tabular}};
\node[main] (w15) [    right=of w14] {\begin{tabular}{c} $d_0,d_1,d_2$ \\ $P(d_0d_1)$\\ $P(d_1d_2)$\\ $P(d_0d_2)$\end{tabular}};

\node[ font=\bfseries, left=0.1cm of r] {$r$};
\node[ font=\bfseries, left=0.1cm of u] {$u$};
\node[ font=\bfseries, left=0.1cm of v0] {$v_0$};
\node[ font=\bfseries, left=0.1cm of v1] {$v_1$};

\node[ font=\bfseries, below=0.1cm of w00] {$w_{00}$};
\node[ font=\bfseries, below=0.1cm of w01] {$w_{01}$};
\node[ font=\bfseries, below=0.1cm of w02] {$w_{02}$};
\node[ font=\bfseries, below=0.1cm of w03] {$w_{03}$};
\node[ font=\bfseries, below=0.1cm of w04] {$w_{04}$};
\node[ font=\bfseries, below=0.1cm of w05] {$w_{05}$};

\node[ font=\bfseries, below=0.1cm of w10] {$w_{10}$};
\node[ font=\bfseries, below=0.1cm of w11] {$w_{11}$};
\node[ font=\bfseries, below=0.1cm of w12] {$w_{12}$};
\node[ font=\bfseries, below=0.1cm of w13] {$w_{13}$};
\node[ font=\bfseries, below=0.1cm of w14] {$w_{14}$};
\node[ font=\bfseries, below=0.1cm of w15] {$w_{15}$};

\draw (r) edge   (u);
\draw (u) edge (v0);
\draw (u) edge [bend left] (v1);

\draw (v0) edge [bend right] (w00);
\draw (v0) edge (w01);
\draw (v0) edge (w02);
\draw (v0) edge (w03);
\draw (v0) edge (w04);
\draw (v0) edge [bend left] (w05);

\draw (v1) edge  [bend right] (w10);
\draw (v1) edge (w11);
\draw (v1) edge (w12);
\draw (v1) edge (w13);
\draw (v1) edge (w14);
\draw (v1) edge [bend left] (w15);

\end{tikzpicture}
}
    \caption{Model $\M_0$ constructed from the tableau obtained for $\phiOne$ outlined in \autoref{subsec-illustration}}
    \label{fig-tableau-example}
\end{figure*}

Now, note that $\M_0,r\not\models \phiOne$, this is because at $u$ we have $\M_0,u\not\models \forall x \psiTwo(x)$ which is a "nested $\forall$ formula" and in particular, $\M_0,u,\sigma_{[x\to d_2]} \not\models \exists y~ \psiFive(x)$ where $d_2$ is the leaf in the skolem tree maintained to generate the witness. In general, we will prove that if the original formula $\phiOne$ is not satisfied at the root of the model that we have created, then the reason for this can be traced to some descendant $w$ where some "nested $\forall$ formula" $\forall x \psi$ is false because of some $\exists y \alpha$ inside the scope of $x$ and for some leaf $d$ in the skolem-witness tree maintained at $w$ we have  $\M,w,\sigma_{[x\to d]}\not\models \exists y~\alpha$ and hence $\M_0,w,\sigma_{[x\to d]}\not\models \psi$ (formally stated and proved in \Cref{lemma-leaf-violation}).

Hence, we need to generate a new witness for $d_2$ and name it $d_3$. But then we have $d_0$ which is the first ancestor of $d_2$ which has the same "atom" "wrt" $\psiTwo(x)$. So we ensure at $u$ and all the descendants of $u$, the following `type' information is preserved:
\begin{itemize}
\item $d_3$ has the same "atom" "wrt" $x$ as $d_1$ (because the witness for $d_0$ is $d_1$)
\item $(d_0,d_3)$ {\em mimics} $(d_0,d_2)$ (because $d_2$ has same type as $d_3$ but not an immediate descendant of $d_0$, which is the same relationship between $d_0$ and $d_3$).
\item $(d_1,d_3)$ also {\em mimics} $(d_0,d_2)$ (because $d_2$ has same type as $d_3$ but not an immediate descendant of $d_0$, which is the same relationship between $d_1$ and $d_3$).
\item $(d_2,d_3)$ {\em mimics} $(d_0,d_1)$ (because $d_1$ has same type as $d_3$ is an immediate descendant of $d_0$, which is the same relationship between $d_2$ and $d_3$).
\end{itemize} 
Thus, in the new model $\M_1$ where for every $w\in \W$ which are the descendants of $u$ (including $u$), we have $\live_1(w) = \live_0(w) \cup \{d_3\}$ and for all $w\in \{  w_{00},w_{01},w_{03},w_{04}, w_{05}, w_{06}, w_{10},w_{11},w_{13},w_{14}, w_{15}, w_{16} \}$ we have $\val_1(w,P) = \val_0(w,P) \cup \{ (d_0,d_3), (d_1,d_3), (d_2,d_3)\}$. Note that we added $P(d_0,d_3), P(d_1,d_3)$ and $P(d_2,d_3)$ because of $P(d_0,d_2), P(d_0,d_2)$ and $P(d_0,d_1)$ "respectively". It is easy to see that this construction gives us a well defined model but in general we need to prove that the way we have extended the "valuation function" will be consistent with the earlier model.

Now, we still have $\M_1,r\not\models \phiOne$, again because $d_3$ does not have a witness at $u$. But $d_2$ now has a witness. Thus we extend $\M_1$ to get $\M_2$ where we add a new element $d_4$ as the witness for $d_3$. But then in $\M_2$ we will not have a witness for $d_4$. We thus obtain a sequence of models $\M_0,\M_1,\M_2\ldots$. 

\bigskip
The final model that we have is the limit of this construction $\M$. In that model we argue that $\M,r\models \phiOne$. Otherwise there is some $d_i$ which does not have a witness at $u$, but this is a contradiction since by construction we would have the witness $d_{i+1}$ for $d_i$ in $\M_i$ which will be part of $\M$.

\end{toappendix}

In \Cref{subsec-illustration}, we informally illustrate the main ideas of our construction for the formula $\phiOne$ by showing how to obtain a tableau which is a `finite witness' for the satisfiability of $\phiOne$. We recommend the reader to go through \Cref{subsec-illustration} to get an intution of the tableau construction.

\section{Tableaux rules for \texorpdfstring{$\ourEBBE$}{pdf}}
\label{sec-tableaux}
\AP
We will always work with formulas in negation normal form and hence the $\neg$ occurs only in "literals". 
Note that the subformulas of $\ourEBBE$ can have formulas that are not in $\ourEBBE$. For instance $\forall x P(x)$ is not in $\ourEBBE$ but is a subformula of $\Diamond\forall x P(x)$. Hence we need to work with subformulas of $\ourEBBE$. We say a formula $\alpha$ is an ""$\ourEBBE$ subformula"" if there is some $\ourEBBE$ formula $\phi$ such that $\alpha\in \SF$. 
 Let\footnote{We write $\neg\psi$ for brevity, more formally it is the negation normal form equivalent of $\neg \psi$. Same comment applies while defining $\compPlus$ and "atom".} $\intro*\SFplus = \SF \cup \{\neg \psi \mid \psi\in \SF\}$.

\AP
\sloppy A ""module"" is either a "literal" or of the form $\Delta \alpha$ where $\Delta\in \{\Box,\Diamond\}$.  Given a "$\ourEBBE$ subformula" $\phi$, the ""components"" of $\phi$ (denoted by $\intro*\comp$) is defined inductively as follows: if $\phi$ is a "module" then $\comp = \{\phi\}$; $\comp[\phi\odot \psi] = \comp[\phi] \cup \comp[\psi]$ where $\odot\in \{\lor,\land\}$; $\comp[Qx~\phi] = \{Qx~\phi\} \cup \comp[\phi]$ where $Q\in \{\forall,\exists\}$. Intuitively, "module" denotes the set of all `atomic' formulas that need to be evaluated in the current "world" and "component" of a formula collects all subformulas that needs to be evaluated in the current "world". Let $\intro*\compPlus = \comp \cup \{\neg \psi \mid \psi \in \comp\}$.

\bigskip
\AP
We call a "$\ourEBBE$ subformula" $\forall x \alpha$ as a ""nested $\forall$ formula""  if there is some $\beta \in \comp[\forall x \alpha]$ where $\beta$ is of the form $\exists y \gamma$ or $\forall y~\delta$. Note that since we are looking at $\ourEBBE$ fragment, from the syntax it follows that $\gamma$ and $\delta$ have to be of the form $\Box \gamma'$ and $\Diamond \delta'$ "respectively". These "nested $\forall$ formulas" play a critical role since these are the only formulas that can have a nested quantification that apply to the same "local domain". In particular, to satisfy $\forall x \alpha$ where $\alpha$ contains $\exists y\Box \beta$ in it, we need to generate a witness $y$ for every $x$ and each such new witness generated will also need a new witness and so on which can potentially force an infinite domain.

\AP
\sloppy Let $\forall x~\phi$ be a "nested $\forall$ formula" where $S = \FV[\forall x~\phi]$ and $\FV[\phi] = S \cup \{x\}$. We define $\intro*\atom \subseteq \compPlus$ is called an ""atom"" of $\phi$ "wrt" $x$ if the following conditions hold: $(1)~\phi \in \atom$; $(2)$ if $\gamma\land \delta \in \atom$ then $\{\gamma,\delta\} \subseteq \atom$; $(3)$ if $\gamma \lor \delta \in \atom$ then $\gamma\in \atom$ or $\delta\in \atom$; $(4)$  $\psi\in \atom$ iff $\neg \psi \not\in \atom$. Furthermore, we define $\intro*\innerExVar[\atom] = \{ y\mid$ there is some formula of the form $\exists y \psi \in \atom\}$ to be the set of all existential variables that occur in $\atom$.

In a "nested $\forall$ formula" $\forall x~\phi$, even though $\FV[\phi]$ contains variables other than $x$, the variable $x$ is special  since $x$ should be instantiated with every "local domain" element. All other free variables would would have been assigned an interpretation in the ancestors. 

\AP
On the other hand, if an "$\ourEBBE$ subformula" of the form  $\exists y ~\alpha$ does not appear in the scope of a $\forall$ then such  $\exists$ quantifiers should be treated differently from the $\exists$ appearing in  the nested form $\forall \exists$.  In a formula $\phi$ define $\intro*\outerExVar$ that collects those existentially quantified variables that are not inside the immedidate scope of any $\forall$ quantifier. If $\phi$ is a "module" then $\outerExVar = \emptyset$; $\outerExVar[\phi\odot \psi] = \outerExVar[\phi] \cup \outerExVar[\psi]$ where $\odot\in \{\land,\lor\}$; $\outerExVar[\exists x \phi] = \{x\} \cup  \outerExVar[\phi]$ and $\outerExVar[\forall x \phi] = \emptyset$.

\AP
For a finite set of "$\ourEBBE$ subformulas" $\Gamma$, we  denote $\intro*\andSet$ to be the formula formed by the conjunction of the formulas in $\Gamma$ and use $\FV[\Gamma], \outerExVar[\Gamma]$ "etc", to denote $\FV[\andSet], \outerExVar[\andSet]$ "etc", "respectively". The set $\Gamma$ is ""consistent"" if there is no formula of the form $\psi$ and $\neg \psi$ such that $\{\psi,\neg \psi\} \subseteq \Gamma$. Note that, "atoms" are always "consistent". If there is some "nested $\forall$ formula" $\forall x \psi\in \Gamma$ then we say $\Gamma$ contains a "nested $\forall$ formula"; otherwise $\Gamma$ is ""nested $\forall$ free"".

As we will see, we will be interested in $\Gamma$ where $\Gamma$ is either "nested $\forall$ free" or contains exactly one "nested $\forall$ formula".

\AP
Consider a "nested $\forall$ formula" of the form $\forall x \big(P(x,u) \lor \exists y~\Box P(x,y)\big)$ where $u$ is a "free variable". Suppose the current "local domain" is $\{d_0,d_1,d_2\}$ with $\sigma(u) = d_0$. Depending on which disjunction is true for each $d_i$, we need to decide if they should need a new witness or not. For instance if $\big(P(x,u) \lor \exists y~\Box P(x,y)\big)$ is true for $x\to d_0$ because of $\exists y~\Box P(x,y)$ then we need a new witness for $y$ given by $d_3$. To remember that $d_3$ is the witness for $d_0$, we create a tree, make $d_0$ as the root with $d_3$ as its child. Furthermore if $d_3$ needs a witness, we add $d_4$ and so on. This construction gives us a tree with root at $d_0$. Similarly if $d_1$ needs a witness then we add a new element $d_5$ and make $d_1$ as root with $d_5$ and create more witnesses from there on which again becomes a tree. Thus we get a forest where the root of each tree corresponds to some $d_i$ that is present in the initial "local domain". We stop the construction of a tree if the "atom" corresponding to the leaf (formulas to be satisfied for the leaf variable with respect to the "nested $\forall$ formula" at hand) is the same as the "atom" to be satisfied for at least two other variables along the path from the root to the leaf. Please refer \Cref{subsec-illustration} for an illustrative example. 

\AP
\begin{definition}[Skolem-Forest]
\label{def-skolem-forest}
Let $\Gamma$ be a finite set of "$\ourEBBE$ subformula" that contains exactly one "nested $\forall$ formula" $\forall x~ \psi \in \Gamma$ and let $S = \FV[\Gamma] \cup  \outerExVar[\Gamma]$. Define ""skolem-forest"" "wrt" $(\Gamma,S)$ denoted by $\intro*\skolemForest[\Gamma,S] = (V,E,\intro*\forestLabel)$ as a labeled forest such that the following properties hold:
    \begin{enumerate}
        \item \label{def-item-root} $(V,E)$ is a forest where the vertex set $V$ is a finite set of variables where
            $S \subseteq V$ such that $S = \{ z\mid z$ is a root of some tree$\}$ ("ie" $S$ forms the set of all root nodes in the forest).            
        \item\label{def-item-long-enough-path} $\intro*\forestLabel$ maps every variable $z\in V$ to some "atom" of $\psi$ "wrt" $x$.
        \item For every $z\in V$ if $\forestLabel(z) = \atom[\psi]$ and  $z$ is a leaf node in some tree $T_r$ rooted at the variable $r$ then one of the following is true:
        \begin{enumerate}
            \item $\forestLabel(z)$ does not contain any formula of the form $\exists y~\beta$ (OR)
            \item there exists two distinct variable $z_1,z_2\ne z$ on the path from $r$ to $z$ in $T_r$ such that $\forestLabel(z) = \forestLabel(z_1) = \forestLabel(z_2)$.
        \end{enumerate}  
    \end{enumerate}

Given a "skolem-forest" $\skolemForest[\Gamma,S] = (V,E,\forestLabel)$ define ""expansion of skolem-forest"" $\skolemForest[\Gamma,S]$ denoted by $\intro*\expandedGamma = \Big(\Gamma \cup \Gamma'\Big) \setminus \{\forall x~\psi\}$ where $\Gamma'$ is the smallest set of formulas that satisfies the following conditions:
\begin{itemize}
    \item For every variable $z\in V$ for every formula $\beta\in \forestLabel(z)$ if $\beta$ is not of the form $\exists y~\beta'$ then $\beta[z/x] \in \Gamma'$.
    \item For every variable $z\in V$ if $z$ is not a leaf node in some tree then for every formula $\beta\in \forestLabel(z)$ if $\beta$ is of the form $\exists y~\beta'$ then there is some child node/variable $y_z$ of $z$ such that $\exists y_z~\beta'[x/z,y/y_z] \in \Gamma'$.
\end{itemize}
\end{definition}

Note that the last condition in the definition of the "skolem-forest" ensures that either the leaf variable does not need any witness or we have two predecessors where we have satisfied the same "atom". The "expansion of skolem-forest" adds all formulas that needs to be satisfied along with $\Gamma$. Note that we only add $\exists y_z~\beta'[x/z,y/y_z] \in \Gamma'$ if $z$ is not a leaf node. Intuitively it means we will satisfy the required witnesses only for the non-leaf variables.

\bigskip
\AP
Now we are ready to give the tableau procedure for the $\ourEBBE$ fragment. For technical convenience, we will use the set of variables also as the set of domain elements. In our setting a ""tableau"" $T$ is a tree structure such that each node is represented by a tuple of the form $(w:\Gamma,S,F)$ where $w$ is a symbol or a finite sequence of symbols intended as the \textit{name} of a possible world in the actual model, $\Gamma$ is a finite set of  "$\ourEBBE$ subformulas", $S$ is the set of variables which forms the roots of each tree in  $F$ which is a "skolem-forest". We intend to use the set of variables as the domain in the tableau-induced actual model. The intuitive meaning of the node $(w: \Gamma,S,F)$ is that all the formulas in $\Gamma$ are supposed to be satisfied at $w$ with $S$ as the local domain such that if we have a  $\forall x \exists y$ nesting in some formula in $\Gamma$ then the witness $y$ for every $x$ can be generated by looking at successor of every $x$ in the "skolem-forest" $F$. A ""tableau rule"" specifies how the node in the premise of the rule is transformed to or connected with one or more new nodes given by the conclusion of the rule. Intuitively, a tableau for $\phi$ is a pseudo model which can be transformed into an "increasing domain model" of $\phi$ under some simple consistency conditions.

\AP
  Applying the "tableau rules" generate a tree-like structure and a "tableau" is said to be ""saturated"" if every leaf node contains only "literals". For any formula $\phi$, we refer to a "saturated tableau" of $\phi$ simply as a tableau of $\phi$. Further, a "saturated tableau" is ""open"" if in every node $(w: \Gamma,S,F)$ of the "tableau", $\Gamma$ does not contain both $\beta$ and $\neg \beta$ for any formula $\beta$.

\bigskip
\AP
We call a formula ""clean"" if no variable occurs both bound and free  and
every variable is quantified at most once.  For instance, the formulas
$\exists x \Box Px \lor \forall x \Diamond Qx$ and $Px \land  \Box \exists x Qx$  are not "clean", whereas $\exists x \Box Px \lor \forall y \Diamond Qy$ and 
$Px \land  \Box \exists y Qy$ are their "clean" equivalents respectively. Note that every 
$\FOML$-formula can be rewritten into an equivalent "clean formula".  A finite set of formulas $\Gamma$ is "clean" if $\andSet$ is "clean".
"Clean formulas" help in handling the witnesses for existential formulas in the tableau in a syntactic way.  

Consider a finite set of formulas $\Gamma$ that is "clean". Suppose we want to expand $\Gamma$ to $\Gamma \cup \{\phi_1,\ldots, \phi_k\}$,  then even if each of $\phi_i$ is "clean", it is possible that a bound variable of $\phi_i$  also occurs in some $\psi \in \Gamma$ or another $\phi_j$. To avoid this, first, we rewrite the bound variables in each  $\phi_i$ one by one by using the fresh variables that do not occur in $\Gamma$ and other previously rewritten $\phi_j$.
Such a rewriting can be fixed by always using the first fresh variable in a fixed  enumeration of all the variables. When $\Gamma$ and $\{\phi_1,\ldots \phi_k\}$ are clear from the context, we denote $\phi_i^*$ to be one such rewriting of $\phi_i$ into a "clean formula". It is not hard to see that the resulting finite set $\Gamma \cup \{\phi_1^*,\ldots \phi_k^*\}$ is "clean". 

\bigskip
\AP
 In a "tableau" node of the form $(w:\Gamma,S,F)$, we allow $F$ to be either a "skolem-forest" or $F = \intro*\notInit$ or $F = \intro*\emptyTree$. If $F = \notInit$, it indicates that $F$ is not yet initialized and if $F = \emptyTree$ then $F$ is an empty tree. We assign $F = \emptyTree$ when $\Gamma$ is "nested $\forall$ free" and hence no skolem witnesses are needed.

  \begin{figure*}
 \begin{center}
 \begin{tabular}{|c c|}
   
   \hline
\multicolumn{2}{|c|}{} \\
     $\dfrac{w:\{\phi_1\lor\phi_2\}\cup \Gamma,S,F}{w: \{\phi_1\}\cup\Gamma,S,F~~~\mid~~~ w:\{\phi_2\}\cup \Gamma,S,F}\ \intro*\orRule$ &  $\dfrac{w:\{\phi_1\land\phi_2\}\Gamma,S,F}{w:\{\phi_1,\phi_2\}\cup \Gamma,S,F} \ \intro*\andRule$ \\&\\
    where $F \ne \notInit$& where $F\ne \notInit$\\
     \hline
     \hline

      \multicolumn{2}{|c|}{}\\
      \multicolumn{2}{|c|}{$\dfrac{w:\{\exists y~ \phi\}\cup \Gamma,~S,~F}
       {w: \{\phi\}\cup \Gamma,~S,~F} \intro*\existsRule$} \\& \\
      \multicolumn{2}{|c|}{where $F\ne \notInit$}\\
      \hline
      \hline
      
       \multicolumn{2}{|c|}{}\\
      \multicolumn{2}{|c|}{$\dfrac{w:\{\forall y~\phi\}\cup \Gamma,S,F}
       {w:~\{\phi^*[y/z] \mid z \forestVariable F\} \cup \Gamma,~S,~F}\intro*\forallRule$} \\& \\
        \multicolumn{2}{|c|}{where $F\ne \notInit$ and }\\
        \multicolumn{2}{|c|}{every $\phi^*[z/y]$ is a clean rewriting of $\phi[y/z]$ "wrt" $\Gamma \cup \{\phi[y/z]\mid z\forestVariable F \}$}\\
      \multicolumn{2}{|c|}{} \\ 
      \hline
      \hline
      
       \multicolumn{2}{|c|}{Given $n\geq 1$ and $m,s\geq 0$} \\
      \multicolumn{2}{|c|}{}\\
       \multicolumn{2}{|c|}{$\cfrac{\genfrac{}{}{0pt}{0}{w:\{\Diamond \phi_1,\ldots,\Diamond \phi_n\}\cup }{\{\Box \alpha_1,\ldots \Box \alpha_m\}\cup} \ \{l_1,\cdots,l_s\},~S,~F}{\left\langle wv_i: \{ \phi_i\}~\cup \{\alpha_j \mid j\le m\},~S_i,~\notInit\right\rangle \ \text{for all} \ i\le n } \quad \intro*\diamondRule$} \\  & \\
       \multicolumn{2}{|c|}{where each $l_k$ is a "literal" and $F \ne \notInit$}\\
       \multicolumn{2}{|c|}{For every $i$, $S_i = S \cup \outerExVar[\phi_i] ~ \cup~~ \bigcup\limits_{i\le m}\outerExVar[\alpha_i]$} \\
      \hline
      \hline

      \multicolumn{2}{|c|}{}\\
      \multicolumn{2}{|c|}{$\dfrac{w:\Gamma ,~S,~\notInit}
       {w: \Gamma',~S',~F} \intro*\nestedForallRule$} \\& \\
      \multicolumn{2}{|c|}{where $\Gamma$ contains a unique "nested $\forall$ formula" $\forall x~\psi$,}\\
      \multicolumn{2}{|c|}{$F$\text{ is a "skolem-forest" } "wrt" $(\Gamma,S)$ and $\Gamma' = \expandedGamma$ is the "expansion of skolem-forest" $F$}\\
      \multicolumn{2}{|c|}{$S'$ is the set of all variables/nodes in the "skolem-forest" $F$}\\
      \hline
      \hline

      \multicolumn{2}{|c|}{}\\
      \multicolumn{2}{|c|}{$\dfrac{w:\Gamma,~S,~\notInit}
       {w: \Gamma,~S,\emptyTree} \intro*\trivialSkolemRule$} \\& \\
      \multicolumn{2}{|c|}{where $\Gamma$ is "nested $\forall$ free"}\\
      \hline
      \hline
      
      \multicolumn{2}{|c|}{Given $m,s\geq 0$:} \\
      \multicolumn{2}{|c|}{} \\
      \multicolumn{2}{|c|}{$\dfrac{w:\{\Box\beta_1,\ldots \Box \beta_m\} \cup \{l_1,\ldots,l_s\},~S,~F}{w:\{l_1,\cdots,l_s\},~S,~F} \quad \intro*\EndRule$} \\&\\
      \multicolumn{2}{|c|}{where $F\ne \notInit$ and each $l_k$ is a "literal"} \\
      \hline
       
\end{tabular}
\caption{Tableau rules for $\ourEBBE$}
\label{fig-tableau for EBBE}
\end{center}
\end{figure*}

\bigskip
The tableau rules for the $\ourEBBE$-fragment are described in Fig. \ref{fig-tableau for EBBE}. 
 The $\andRule$ and $\orRule$ are standard, where we make a non-deterministic choice of one of the branches for $\orRule$. The  $\EndRule$ says that  if we are left with only modules and there are no $ \Diamond$ formulas, then the branch does not need to be explored further. The $\diamondRule$ creates one successor world for every $\Diamond$ formula at the current node and includes all the $\Box$ formulas that need to be satisfied along with the $\Diamond$ formula. In this case, $S$ is inherited in the successor worlds to preserve the increasing domain property and $F$ is set to $\notInit$ to {\em `forget'} the skolem-witness information of the parent. The $\existsRule$ rule picks $y$ itself as the witness to satisfy $\exists y \phi$. Note that in this case,by construction, $y\in S$ and hence we are not introducing any new variable. The $\forallRule$ rule expands the set of formulas to include a clean version of $\phi[z/y]$ for every variable $z \in S$ where $S$ is intended to be the "local domain". 
 
 The $\nestedForallRule$ is novel. In this rule, the tableau non-deterministically picks a "skolem-forest" "wrt" $\Gamma$ and $\forall x~\psi$. $\Gamma$ is then replaced by the $\expandedGamma$ which removes this unique "nested $\forall$ formula" $\forall x ~\psi$ and replaces it with a finite set of "nested $\forall$ free" formulas. Note that for this rule to be well defined, we need to ensure that $\forall x~\psi$ is the unique "nested $\forall$ formula" in $\Gamma$ (otherwise, the "skolem-forest" is not well defined). But this follows from construction since the root cannot contain a "nested $\forall$ formula" and every "nested $\forall$ formula" has be be of the form $\Diamond \forall x~\psi$ where we would have applied a $\diamondRule$ in the parent and we will have no other "nested $\forall$ formula" that is inherited from the parent.
 The $\trivialSkolemRule$ initializes the "skolem-forest" to $\emptyTree$ to asserts that no skolem witnesses are needed.

 Note that there is an implicit ordering on how the rules are applied: $\diamondRule$ can be applied at a node $(w: \Gamma,S,F)$ only if $\Gamma$ contains only "modules" and hence may be applied only after all other rules have been applied as  many times as necessary at $w$. For a new world label $w$, initially $F = \notInit$ always (from $\diamondRule$) and hence we can only apply either $\trivialSkolemRule$ or $\nestedForallRule$ depending on whether $\Gamma$ is "nested $\forall$ free" or not "respectively". Note both $\trivialSkolemRule$ and $\nestedForallRule$ always result in a set $\Gamma$ that is "nested $\forall$ free".
 Moreover, if $F\ne \notInit$ then we have already applied either $\trivialSkolemRule$ or $\nestedForallRule$ for some node in the tableau with world label $w$ in the past. Hence, whenever $F \ne \notInit$, then $\Gamma$ is always "nested $\forall$ free". After this point, we will only have to deal with "nested $\forall$ free" formulas. Moreover, the set $S$ resulting after this step will remain the same until the "world" name remains $w$.

 Note that the $\diamondRule$ is the only rule that can change (the name of) the possible world, thus creating a new successor "world". It simply extends the name $w$ by new symbols $v_i$ for each successor. Therefore there can be many nodes in the tableau sharing the same world name, but such nodes form a path. Given $w$ we use $\intro*\lastW$ to denote the ""last node"" down the "tableau", whose first component is $w$. 

\begin{propositionrep}
\label{prop-some rule can be applied always}
For every "tableau" $T$ and every node $\tau= (w: \Gamma,S,F)$ in $T$, if $\tau$ is a leaf in $T$ then $F\ne \notInit$ and moreover, either $\Gamma$ contains only "literals" or there is some rule that can be applied at $\tau$.
\end{propositionrep}
\begin{proof}
Assume that $\tau = (w:\Gamma,S,F)$ is a leaf not but contains a formula that is not a "literal".

First, if $F = \notInit$ then we can always apply either $\nestedForallRule$ or $\trivialSkolemRule$ depending on whether $\Gamma$ is "nested $\forall$ free" or not. So assume that $F \ne \notInit$.

Now suppose $\Gamma$ contains at least one formula of the form $\phi\land \psi$ or $\phi \lor \psi$ then we can apply the $\andRule$ or $\orRule$ "respectively". So assume that the operators $\land, \lor$ do not occur as the outermost connective in any formula of $\Gamma$. Further, if every $\phi\in \Gamma$ is a 
"module" then we can apply $\diamondRule$ if  at least one formula of the form $\Diamond \phi\in \Gamma$, otherwise we can apply $\EndRule$.

So the remaining case is that there is at least one formula in $\Gamma$ which has a quantifier. Now if we have some $\exists x~ \phi\in \Gamma$ then we can apply $(\exists)$ rule. Otherwise, every quantified formula in $\Gamma$ is of the form $\forall y \phi$ and we can apply the $\forallRule$.
\end{proof}

Given a formula $\theta \in \ourEBBE$ the "tableau" starts with $(r,\{\theta\}, S,\notInit)$ where $S = \FV[\theta] \cup \{z\}$ where $z$ is a fresh variable not occurring in $\theta$.

\begin{theorem}
\label{thm-tableau for EBBE}
For any clean $\ourEBBE$ formula $\theta$, let $S_0 = \FV[\theta]\cup \{z\}$ where $z$ does not occur in $\theta$. Then:

There is an "open tableau" $T$ with root $(r:\{\theta\},S_0,\notInit)$ iff $\theta$ is satisfiable in an "increasing domain model".
\end{theorem}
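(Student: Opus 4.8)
The plan is to prove the two implications of the equivalence separately, which is also how \autoref{sec-soundness} and \autoref{sec-completeness} are organised. For the direction ``if $\theta$ is "satisfiable" then an "open" tableau exists'', I would fix an "increasing domain model" $\M$ with $\M,w_0,\sigma_0\models\theta$ and build the tableau for $\theta$ top-down, resolving every non-deterministic choice according to $\M$, maintaining the invariant that each generated node $(w:\Gamma,S,F)$ carries a world $v$ of $\M$ and an assignment $\sigma$ with $\sigma(S)\subseteq\D_v$ and $\M,v,\sigma\models\gamma$ for all $\gamma\in\Gamma$, and, when $F$ is a "skolem-forest", $\sigma$ maps each node $z$ of $F$ to an element of $\D_v$ realising the "atom" $\forestLabel(z)$ and each witness edge of $F$ to a genuine existential witness. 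Then $\orRule$ picks the disjunct true in $\M$; $\existsRule$ and $\forallRule$ preserve the invariant routinely (here $S$ is the "local domain", whose members are exactly the variables $z$ substituted by $\forallRule$); $\diamondRule$ uses that each $\Diamond\phi_i$ has an $\R$-successor of $v$ satisfying $\phi_i$ and all the $\Box\alpha_j$, that the "local domain" only grows along $\R$, and that the new outer-$\exists$ variables of $S_i$ can be sent to witnesses supplied by $\M$. The only delicate rule is $\nestedForallRule$: since $\M,v\models\forall x\,\psi$, every element of $\D_v$ realises some "atom" of $\psi$ with respect to $x$, and every conjunct $\exists y\,\beta$ of that "atom" has a witness inside $\D_v$; chaining witnesses produces a possibly infinite tree of domain elements, but there being only finitely many "atoms", along every branch some "atom" repeats twice within a bounded depth, so cutting each branch there yields a finite "skolem-forest" meeting \Cref{def-skolem-forest} whose "expansion of skolem-forest" $\expandedGamma$ is realised at $v$ by construction. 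Finally the tableau is finite: along any branch $\diamondRule$ strictly decreases the maximal modal depth of the current formula set while no other rule increases it, and within a fixed world name only finitely many rule applications are possible (a well-founded measure on the formulas), so the construction halts with a "saturated" tableau, which is "open" because every $\Gamma$ occurring in it is realised in $\M$, hence "consistent".

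For the direction ``if an "open" tableau exists then $\theta$ is "satisfiable"'' --- the one informally carried out in \Cref{subsec-illustration}, where the real work lies --- I would first read off from a "saturated" "open" tableau $T$ a canonical model $\M_0$ whose "worlds" are the world names of $T$, the data of a name $w$ being taken from the "last node" $\lastW$: $\D_w$ is the $S$-component of $\lastW$; $w\to wv_i$ whenever $\diamondRule$ produced $wv_i$; and $\val(w,P)$ is the set of positive "literals" $P(\bar c)$ at $\lastW$. Monotonicity of the "local domain" along $\R$ holds because $\diamondRule$ sets $S_i\supseteq S$. An induction along the tableau rules (a truth lemma) shows that every formula at a node with world name $w$ is true at $w$ in $\M_0$ --- \emph{except} that a "nested $\forall$ formula" $\forall x\,\psi$ may fail, and only because a leaf variable of the associated "skolem-forest" was left without its existential witness; this is the content of \Cref{lemma-leaf-violation}. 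The repair is iterative: if a leaf $d$ of a "skolem-forest" at a world $u$ misses its witness, the first ancestor $c$ of $d$ with $\forestLabel(c)=\forestLabel(d)$ prescribes adding a fresh element $d'$ to $\D_u$ and to every $\R$-descendant of $u$ and extending $\val$ so that $d'$ \emph{mimics}, towards the older elements, whatever the tableau forced between $c$ and its witness child; then $d'$ behaves like an element already present, so the new $\M_{i+1}$ is again an "increasing domain model" satisfying everything $\M_i$ did. Treating all deficient leaves fairly gives a sequence $\M_0,\M_1,\dots$, each extending the previous, whose union $\M$ is an "increasing domain model" in which every element has all the witnesses required by every "nested $\forall$ formula" (each deficient element is repaired at a finite stage and never spoilt afterwards), so the truth lemma yields $\M,r,\sigma\models\theta$.

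The main obstacle is the repair step in the second direction. One must show that the \emph{mimicking} extension of $\val$ is "consistent" --- that giving $d'$ the predicate-type of the repeated ancestor's witness never conflicts with "literals" already forced in $\M_i$ --- and that the limit model genuinely satisfies $\theta$, which calls for an induction tracking how the finite tableau's atoms and skolem-forests govern the infinitely many elements added by the repairs. Pinning down the precise sense in which $(c,d')$ should simultaneously mimic $(c,\text{the witness child of }c)$ for all relevant pairs of old elements, and showing this invariant survives later repairs, is the core technical difficulty.
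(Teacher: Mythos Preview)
Your soundness direction ($\Leftarrow$) is the paper's argument: maintain satisfiability across every rule, and for $\nestedForallRule$ build the skolem-forest by chasing witnesses in $\M$ and cutting each branch once an atom repeats; the paper isolates this as \Cref{lemma-skolem-forest-has-bounded-size}.

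For completeness ($\Rightarrow$) you take a genuinely different route from the paper's formal proof. You propose repairing the \emph{model}: add a fresh $d'$ and extend $\val$ so that $d'$ mimics, towards the old elements, the behaviour of the witness child of a repeated ancestor. The paper instead repairs the \emph{tableau}: \Cref{lemma-skolem-tree-extension} shows that given an open tableau $T$ and a deficient leaf $z$, one can produce a larger \emph{open} tableau $T'$ in which $z$ is no longer a leaf, by grafting onto $z$ a fresh copy of the subtree rooted at the first repeated ancestor $z_1$ (the requirement of \emph{two} repeated ancestors in \Cref{def-skolem-forest} is what guarantees this copy is itself a valid skolem-forest). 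One then builds a fair sequence of open tableaux $T_0,T_1,\dots$, reads each $\M_i$ canonically from $T_i$, and takes the limit exactly as you describe; \Cref{lemma-leaf-violation} plays the same role in both approaches.

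What the paper's syntactic detour buys is precisely the dissolution of the difficulty you flag: because each $T_{i+1}$ is \emph{open}, its literals are consistent by definition, so $\val$ for $\M_{i+1}$ is automatically well-defined---no separate argument is needed about how the new element relates to every tuple of old ones. The ``mimicking'' you describe is exactly what the tableau extension encodes (new variables carry copied formulas), but openness delivers consistency for free rather than as an invariant to be maintained. Your semantic approach follows the paper's informal illustration in \Cref{subsec-illustration} faithfully; the formal proof in \autoref{sec-completeness} trades it for this cleaner syntactic mechanism.
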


The goal of the rest of the paper is to prove \autoref{thm-tableau for EBBE}.

\section{Soundness}
\label{sec-soundness}
Before we prove  $(\Leftarrow)$ of \autoref{thm-tableau for EBBE}, we have the following useful lemma.

\begin{lemmarep}
\label{lemma-skolem-forest-has-bounded-size}
Let $\Gamma$ be a finite set of "$\ourEBBE$ subformulas" that contains a unique  "nested $\forall$ formula" $\forall x~\psi$ and let $S = \FV[\Gamma] \cup \outerExVar[\Gamma]$. If $\Gamma$ is "satisfiable", then there is a "skolem-forest" $\skolemForest[\Gamma,S]$ of size at most $n\cdot m^{2^{O(m)}}$ where $|S|=n$ and $|\phi| = m$. Moreover, there is some  "expansion of skolem-forest" $\expandedGamma$ such that $\expandedGamma$ is "satisfiable".
\end{lemmarep}
\begin{proof}
    Let $\M,w,\sigma \models \andSet$ and let $S = \FV[\Gamma] \cup \outerExVar[\Gamma]$ where $\outerExVar[\Gamma] = \{ y_1\ldots y_k\mid k\ge 0\}$. For every $y_i\in \outerExVar[\Gamma]$ we have a formula of the form $\exists y_i~\alpha_i\in \Gamma$ and there exists some $d_i\in \live(w)$ such that $\M,w,\sigma_{[y_i\rightarrow d_i]} \models \alpha_i$.

To construct the "skolem-forest" $F$, we define the nodes of all trees in $(V,E)$ by induction of the height of the nodes. First, for every $z\in S$ we create a new tree $T_z$ rooted at $z$. Now let $\hat{z} = \sigma(z)\in \live(w)$ if $z\in \FV[\Gamma]$ and if $z = y_i\in \outerExVar[\Gamma]$ then let $\hat{z} = d_i\in \live(w)$. Now since $\M,w,\sigma \models \forall x~\psi$, for every $d\in \live(w)$, we have $\M,w,\sigma_{[x\to \hat{z}]} \models \psi$. Define $\forestLabel(z) = \{ \zeta \mid \zeta\in \compPlus[\psi]$ and $\M,w,\sigma_{[x\to \hat{z}]} \models \zeta\}$. Thus, there is some "atom" $\atom[\psi]$ of $\psi$ such that $\forestLabel(z) = \atom[\psi]$ "wrt" $x$. Note that by construction, for every $\zeta\in \forestLabel(z)$ we have $\M,w,\sigma_{[x\to \hat{z}]} \models \zeta$.

\bigskip
    Assume that inductively we have constructed some tree $T_r$ rooted at $r$ upto depth $h$ and we are at variable $z$. Suppose $\forestLabel(z)$ does not contain any formula of the form $\exists y~\alpha$ or there is some two distinct variables $z_1,z_2\ne z$ along the path from $r$ to $z$ in $T_r$ such that  $\forestLabel(z_1) = \forestLabel(z_2) = \forestLabel(z)$, then we are done.  Otherwise, for every $\exists y \alpha \in \forestLabel(z)$, create a new child $y_z$ of $z$ in $T_r$ where $y_z$ is fresh. It remains to define $\forestLabel(y_z)$ which is done as follows:
    
    Since $\exists y \zeta \in \forestLabel(z)$, by the inductive construction, we have $\M,w,\sigma_{[x\to \hat{z}]} \models \exists y \zeta$. Hence there is some $c\in \live(w)$ such that $\M,w,\sigma_{[x\to \hat{z}, y\to c]}\models \zeta$. Furthermore, we also have $\M,w,\sigma_{[x\to c]}\models \psi$ (since $\M,w,\sigma \models \forall x \psi$).  
    Let $\forestLabel(y_z) = \{ \zeta \mid \zeta\in \compPlus[\psi]$ and $\M,w,\sigma_{[x\to c]} \models \zeta\}$ and let $\hat{y_z} = c$. 

\medskip
    Note that since $\forestLabel(z)\subseteq \compPlus[\psi]$, that there are only $2^{O(m)}$ many possible $\forestLabel(z)$. Hence every path in the forest can be of height at most $2^{O(m)}$. Moreover, every vertex can have at most $O(m)$ many children. Hence, the total number of nodes in $\skolemForest$ is $n\cdot m^{2^{O(m)}}$. So we have the "skolem-forest" $F$ of the required size.

Finally, define "expansion of skolem-forest" $\expandedGamma = (\Gamma \cup \Gamma') \setminus \{\forall x~\psi\}$ where
$\Gamma' = \bigcup\limits_{z\in V} \{ \beta[z/x]\mid \beta$ is not of the form $\exists y \alpha$ and $\beta\in \forestLabel(z)\} \cup \bigcup\limits_{z\in V\\\text{where }z\text{ is not a leaf}}\{ \alpha[z/x,y/y_z] \mid \beta$ is of the form $\exists y \alpha\}$.
By the definition of $\Gamma'$ it follows that the "expansion of skolem-forest" $\expandedGamma$ satisfies all the required properties. 

To prove that $\expandedGamma$ is "satisfiable", if $\phi \in \Gamma\setminus \Gamma'$ then $\M,w,\sigma \models \phi$ and if $\phi\in \Gamma'$ and it is not of the form $\exists y~\beta$ then $\phi$ is of the form $\phi[z/x]$ for some $\phi\in \compPlus[\psi]$. Now $\M,w,\sigma_{[x\to \hat{z}]}\models \phi(x)$ and hence $\M,w,\sigma_{[z\to \hat{z}]}\models \phi[z/x]$. If $\phi$ is of the form $\exists y~\beta[x/z,y_z/y]$ and by construction we have $\M,w,\sigma_{[z\to \hat{z}, y\to \hat{y_z}]}\models \beta[z/x,y/y_z]$. Hence $\expandedGamma$ is "satisfiable".
\end{proof}
\begin{proofsketch}
 Since $\Gamma$ is "satisfiable", there is some $\M$ and some $w\in \W$ and an appropriate $\sigma$ such that $\M,w,\sigma \models \Gamma$. For every $z\in S$, we make $z$ as a root and assign them the "atom" that corresponds to what $\sigma(z)$  prove in that in the same model $\Gamma'$ will also be valid. We keep on extending the trees by looking at some domain element appropriately to define the $\forestLabel(z)$. The tree size is `small' because every node has boundedly many children to consider and the depth of the tree is bounded because $\forestLabel(z)$ will repeat eventually. For full proof, please refer \Cref{sec-soundness-appendix}.
\end{proofsketch}

\begin{toappendix}
\label{sec-soundness-appendix}
\begin{proof}[Proof of $(\Leftarrow)$ of \autoref{thm-tableau for EBBE}]
From \Cref{prop-some rule can be applied always} it follows that we can always apply some rule until every leaf node $(w: \Gamma,S,F)$ is such that $\Gamma$ contains only "literals". Thus every (partial) tableau can be extended to a "saturated tableau". To prove that such a "tableau" is "open", it  suffices to show that all rules preserve "satisfiability". 
"ie" we prove that for every rule, if the antecedent $(w: \Gamma, S,F)$ is such that $\Gamma$ is "satisfiable", then every new node in the consequent $\langle v : \Gamma', S',F' \rangle$ the set $\Gamma'$ is also "satisfiable" (and hence cannot violate the "open" condition).

\begin{itemize}
\item The case of $\andRule$ and $\EndRule$ are trivial. 

\item For $\orRule$, if $\Gamma \cup\{\phi_1\lor\phi_2\}$  is satisfiable, then $\Gamma\cup \{\phi_1\}$ is satisfiable or $\Gamma\cup \{\phi_2\}$ is satisfiable. The non-deterministic guess of the $\orRule$ ensures that $\Gamma'$ is "satisfiable". 

\item For $\existsRule$,  $\Gamma = \Gamma' \cup \{ \exists y~\phi\}$ where the $\existsRule$ is applied for $\exists y~\phi$. Suppose $\M=(\W,\D,\live,\R,\val)$ is an "increasing domain model", $w\in \W$ and $\pi$ is an assignment such that  $\M,w,\pi\models\exists y\phi\land \andSet[\Gamma']$. By semantics, it follows that there is a element $d\in \live(w)$ such that $\M,w,\pi_{[y\mapsto d]}\models\phi$. By "cleanliness", $y$ is not free in  $\Gamma$ and hence $\M,w,\pi_{[y\mapsto d]}\models\andSet$. Thus, $\{\phi\}\cup\Gamma'$  is also "satisfiable". 

\item For $\forallRule$ rule, let $\Gamma = \Gamma' \cup \{ \forall y~\phi\}$ suppose $\M,w,\pi\models\forall y \phi\land \andSet[\Gamma']$.  By semantics, for every $d\in \live(w)$ we have  $\M,w,\pi_{[y\mapsto d]}\models\phi $. Let $S = \{z_0, z_1,\ldots z_k\}$ and for all $i\le k$ let $\pi(z_i) = d_i$. 

Now since every $\phi^*[z/y]$ is a clean rewriting of $\phi[z/y]$, we have $\M,w,\pi_{[z_0\mapsto d_0,\ldots z_k\mapsto d_k] }\models \phi^*[z_0/y]\land \ldots \phi^*[z_k/y]$. Also, we also have $\M,w,\pi_{[z_0\mapsto d_0,\ldots z_k\mapsto d_k]} \models \andSet[\Gamma']$.
Hence the set of formulas $\{\phi^*[z/y]\mid z\in S\}\cup \Gamma'$ is "satisfiable".  

\item For $\diamondRule$, suppose $\M,w,\pi\models \Diamond \phi_1 \land \ldots\land \Diamond \phi_n \land \Box \alpha_1 \land \ldots \land \Box \alpha_m\land l_1\land \ldots \land l_s$ then pick an arbitrary successor world $wv_i$. For every $\phi_i$ and we have $w\rightarrow u\in \R$ and $\M,u,\sigma \models \phi_i \land \alpha_1\land \ldots \alpha_m$. Hence $\Gamma'$ is "satisfiable".

\item For $\nestedForallRule$, suppose $M,w,\pi \models \hat\Gamma \land \forall x~\psi$ where $\forall x~\psi$ is a "nested $\forall$ formula", then by \Cref{lemma-skolem-forest-has-bounded-size}, we can always have the required "skolem-forest" $F$ of bounded size with the corresponding $\Gamma'$ to be satisfiable.

\item For $\trivialSkolemRule$, the claim is obvious since $\Gamma$ remains the same.

\end{itemize}
\end{proof}
\end{toappendix}

To prove $(\Leftarrow)$ of \autoref{thm-tableau for EBBE}, from \Cref{prop-some rule can be applied always} it follows that we can always apply some rule until every leaf node $(w: \Gamma,S,F)$ is such that $\Gamma$ contains only "literals". Thus every (partial) tableau can be extended to a "saturated tableau". To prove that such a "tableau" is "open", it  suffices to show that all rules preserve "satisfiability". 
"ie" we prove that for every rule, if the antecedant $(w: \Gamma, S,F)$ is such that $\Gamma$ is "satisfiable", then every new node in the consequent $\langle v : \Gamma', S',F' \rangle$ the set $\Gamma'$ is also "satisfiable" (and hence cannot violate the "open" condition).

For all rules except the $\nestedForallRule$, the claim follows along the standard lines and for $\nestedForallRule$, we use \Cref{lemma-skolem-forest-has-bounded-size} to prove the claim. 
Formal proof is provided in \Cref{sec-soundness-appendix}.

\section{Completeness}
\label{sec-completeness}
 Let $\intro*\identity:\Var\to \Var$ be the identity mapping.  Let $T$ be an "open tableau". For every node $\tau = (w:\Gamma,S,F)$ we denote $\tau_w,\tau_\Gamma, \tau_S,\tau_F$ to denote the respective components and let $\lastW = (w:\Gamma_w,S_w,F_w)$ be the "last node" with world label $w$. 
 
Suppose $T$ is an "open tableau" that contains some node of the form $\tau = (w:\Gamma,S,F)$ obtained using the $\nestedForallRule$ where $F$ is a "skolem-forest" and let $z$ be a leaf node in some tree of $F$ such that $\exists y~\alpha \in \forestLabel(z)$. From the construction it means that we want a witness for $z$ but the "skolem-forest" is not providing one. The next definition shows how to extend the "skolem-forest" so that $z$ will have the witness.

\begin{definition}[Tableau extension]
\label{def-tableau-extension} 
Let $T$ be an "open tableau" and $\tau = (w:\Gamma,S,F)$ be a node in $T$ which is obtained by applying $\nestedForallRule$ and let $z$ be a leaf node in some tree of $F$ such that there is some $\exists y \alpha\in \forestLabel(z)$. Let $T_\tau$ be the subtree of $T$ rooted at $\tau$. We say $T'$ ""extends"" $T$ at $\tau$ for $z$ if the following conditions hold:
\begin{enumerate}
    \item All vertices $\tau_1\in T\setminus T_\tau$ continue to be a node in $T'$.
    \item The node $\tau$ in $T$ is replaced with $\tau' = (w:\Gamma',S',F')$ where $F'$ extends $F$ and $z$ is not a leaf node such that $\tau'$ is a valid result of applying $\nestedForallRule$ to the predecessor of $\tau$ in $T$.
    \item $\{u\mid (u:\Gamma_1,S_1,F_1) \in T_\tau\} = \{v\mid (v: \Gamma_1',S_1',F_1') \in T'_{\tau'}\}$.
\end{enumerate}
\end{definition}

Note that, in the second condition, any arbitrary extension of the "skolem-forest"  $F$ to $F'$ is allowed as long as $z$ is not a leaf node and $\tau'$ satisfies all the required conditions for the $\nestedForallRule$. The first condition states that the extended tree does not change the nodes of $T$ that are not in the subtree $T_\tau$. The second condition states that $z$ is not a leaf node and hence, for every $\exists y~\alpha\in \forestLabel'(z)$ we will be able to find a corresponding witness in the "skolem-forest" $F'$.  But then this means that we will have more formulas to satisfy. The third condition ensures that the "worlds" that are needed to satisfy the formulas remains the same.
The following lemma states that it is always possible to extend an "open tableau".

\begin{lemma}
\label{lemma-skolem-tree-extension}
Let $T$ be an "open tableau" and let $\tau = (w:\Gamma,S,F)$ be a node in $T$ obtained using the $\nestedForallRule$ and let $z$ be a leaf node in $F$. Then there exists an "open tableau" $T'$ that extends $T$ at $\tau$ for $z$.
\end{lemma}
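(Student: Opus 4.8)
The plan is to prove the lemma in two stages: first enlarge the skolem-forest $F$ to a valid skolem-forest $F'$ in which $z$ is no longer a leaf, and then replay the tableau rules from the rebuilt node $\tau'$ downwards, checking that the resulting tree $T'$ is a saturated open tableau satisfying conditions~(1)--(3) of \Cref{def-tableau-extension}. I use throughout that the leaf $z$ in question carries some $\exists y~\Box\alpha \in \forestLabel(z)$ --- the situation in which \Cref{def-tableau-extension} is stated (when $\forestLabel(z)$ contains no existential formula there is nothing to extend).

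\emph{First stage.} Since $z$ is a leaf with $\exists y~\Box\alpha \in \forestLabel(z)$, clause~(3)(b) of \Cref{def-skolem-forest} provides two distinct variables $z_1, z_2 \ne z$ on the path from the root $r$ of $z$'s tree to $z$ with $\forestLabel(z_1) = \forestLabel(z_2) = \forestLabel(z)$. I would build $F'$ from $F$ by adding, for each existential obligation $\exists y'~\Box\beta \in \forestLabel(z)$, a fresh variable as a child of $z$ and giving it the label $\forestLabel(z)$ itself. Then $z$ is no longer a leaf, and every new node is a legitimate leaf of $F'$: its label is $\forestLabel(z)$, which already occurs at the distinct ancestors $z_1, z_2$ on the path from $r$, so clause~(3)(b) is met for it. Hence $F'$ is a valid skolem-forest, and re-applying $\nestedForallRule$ at the predecessor of $\tau$ with $F'$ in place of $F$ yields a legal node $\tau' = (w:\Gamma',S',F')$, where $\Gamma'$ is the corresponding expansion and $S'$ is the set of all nodes of $F'$.

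\emph{Second stage.} By construction $\Gamma'$ is the expansion of $F$ augmented by (i) the existential obligations at $z$ that become active now that $z$ is no longer a leaf, and (ii) the obligations contributed by the newly added leaves. Since every new node carries the label $\forestLabel(z)$ --- which is already the label of $z$ and of the ancestor $z_1$ in $F$ --- each augmenting formula is a variable-renamed copy of a formula already present in $\Gamma$. I would then build $T'_{\tau'}$ by replaying, step for step, the rule applications that $T$ performs inside $T_\tau$, driving each augmenting formula through the same rule applications as the original it copies. A careful induction on $T_\tau$ then shows that every rule used stays applicable --- the $\existsRule$ witnesses still lie in the enlarged $S'$, the $\forallRule$ ranges over it, the $\diamondRule$ resets the forest to $\notInit$ and pushes the enlarged set of $\Box$-obligations, and a fresh $\nestedForallRule$ or $\trivialSkolemRule$ can be re-applied below where needed --- that (the delicate point) the enlarged local domains and formula-sets never force a world-name outside those already occurring in $T_\tau$, and that no node ever contains a formula together with its negation, because every augmenting formula is a renaming of a formula already consistent in the matching node of $T$. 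Conditions~(1)--(3) of \Cref{def-tableau-extension} then hold, so $T'$ extends $T$ at $\tau$ for $z$ and is open.

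The main obstacle is the second stage, and specifically establishing condition~(3) simultaneously with openness through successive $\diamondRule$ applications. One must fix at the outset a correspondence between the new forest-variables and the old ones they copy, and between the nodes of $T'_{\tau'}$ and those of $T_\tau$, and maintain it as an invariant: each $\diamondRule$ wipes the forest to $\notInit$ and pushes the $\Box$-obligations --- now including the renamed copies --- to the successors, so one has to argue that the augmented formula-sets at the successors remain faithful copies of what was already there, and hence neither force a fresh successor world nor introduce a contradictory literal. Keeping this invariant honest through the interleaving of the $\diamondRule$ with re-applications of $\nestedForallRule$ and $\trivialSkolemRule$ at deeper worlds is the delicate bookkeeping; once it is in place, checking the individual rules is routine.
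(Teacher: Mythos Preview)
Your construction of $F'$ in the first stage is indeed a valid skolem-forest, but the label you assign to the new children---namely $\forestLabel(z)$ itself---is the wrong choice, and it breaks the second stage. The flaw is in the consistency argument: you claim that ``every augmenting formula is a renaming of a formula already consistent in the matching node,'' but the renamings you use for the different new formulas are \emph{not compatible with one another}. For the non-existential obligations $\beta[c/x]$ of the new child $c$ you implicitly rename $z \mapsto c$, while for the now-active witness formula $\beta'[z/x,c/y]$ of $z$ you need to rename $z_1 \mapsto z,\ c_1 \mapsto c$ (where $c_1$ is the child of $z_1$). There is no single substitution $\rho$ with $\rho(\Gamma') \subseteq \Gamma$, and without that, individual membership says nothing about joint consistency.

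Here is a concrete failure. Take $\psi = \exists y\,\Box(Q(x)\leftrightarrow\neg Q(y)) \wedge (\Box Q(x)\vee\Box\neg Q(x)) \wedge \Diamond R$, with atoms $A_1$ (containing $\Box Q(x)$) and $A_2$ (containing $\Box\neg Q(x)$). An open tableau can carry the skolem-path $r_0{:}A_1 \to r_1{:}A_2 \to r_2{:}A_1 \to r_3{:}A_2 \to r_4{:}A_1$ with $r_4$ a legal leaf (its label repeats at $r_0,r_2$). Your rule labels the fresh child $c$ of $r_4$ with $A_1$, so the expansion contains both $\Box Q(c)$ and the witness formula $\Box(Q(r_4)\leftrightarrow\neg Q(c))$; together with $\Box Q(r_4)$ this forces $Q(c)$ and $\neg Q(c)$ in every successor world, so no open extension exists. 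The paper avoids this by grafting a \emph{copy of the entire subtree $T_{z_1}$} at $z$: the new child of $z$ then inherits the label of the child of $z_1$ (here $A_2$, not $A_1$), which is precisely the label compatible with being $z$'s witness. That full copy is what makes a single coherent renaming---new node $y'$ to its original $y$---available for the replay argument in the second stage.
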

\begin{proofsketch}
Pick a node $\tau = (w:\Gamma,S,F)$ as specified and let $z$ be a leaf node in one of the trees $T$ of $F$ such that $\exists y~\alpha \in \forestLabel(z)$. Let $r$ be the root of $T$. Since $F$ is a "skolem-forest", there are two distinct $z_1,z_2\ne z$ along the path from $r$ to $z$ in $T$ such that $\forestLabel(z_1) = \forestLabel(z_2) = \forestLabel(z)$. Let $T_{z_1}$ be the subtree of $T$ rooted at $z_1$. Create a new copy of $T_{z_1}$ (call it $T'_{z_1}$) and replace the root of $T'_{z_1}$ with $z$ and all other descendants with a fresh variable.
Attach $T'_{z_1}$ at $z$ in $T$ to get $T'$ so that the subtree rooted at $z$ in $T'$ is $T'_{z_1}$. For every new variable $y\in T'$ define $\forestLabel'(y') = \forestLabel(y)$ where $y'$ is fresh and is a copy of some $y$ in $T$. For all the old variables $y \in T'$ keep $\forestLabel'(y) = \forestLabel(y)$.

In the new tree $T'$, $z$ is not a leaf anymore and every new variable $y'$ in $T'_{z_1}$ is a copy of $y$ in $T_{z_1}$.  $F'$ obtained from $F$ by replacing $T$ with $T'$. It can be verified that $F'$  satisfied all conditions to be a "skolem-forest" since the old leafs continue to satisfy the conditions and we can argue that if some new leaf violates the condition then some old leaf in $T$ was already violating the condition. 

\sloppy The new "expansion of skolem-forest" $\expandedGamma'$ can be constructed from $\expandedGamma$ by copying $\Gamma$ for the new variables with corresponding substitution. "ie" $\expandedGamma ' = \expandedGamma \cup \{ \alpha[y'/y]\mid y'$ is a new variable which is a copy of $y$ and $\alpha(y)\in \expandedGamma$ and $\alpha(y)$ is not of the form $\exists y_2~\beta\} \cup \{ \exists y_2~\beta[y_1'/y_1, y_2'/y_2]\mid \exists y_2~\beta[y_1,y_2]\in \expandedGamma$ and $y_1',y_2'$ are the copies of $y_1,y_2$ "respectively"$\}$. Again it is can be verified that the new $\expandedGamma'$ is indeed an "expansion of skolem-forest" $F'$ (otherwise we can argue that the original $\expandedGamma$ itself would is not an "expansion of skolem-forest"). 

To complete the tableau, in every descendant $\tau' = (w': \Gamma',S',F')$ of $\tau$ in $T$ (where $w'\ne w$), we define new node $\tau'_1 = (w' : \Gamma'_1,S'_1,F'_1)$ where $S_1' = S' \cup \{ y\mid y$ is added fresh$\}$ and if $F'_1\ne \notInit,\emptyTree$ then $F_1'$ extends $F'$ by adding a new tree rooted at $T_{y'}$ for every fresh variable added, where $T_{y'}$ is an isomorphic copy of $T_y$ where $y'$ is a copy of $y$. 

Finally we will need to add intermediate nodes between successive nodes in the descendants of $\tau$ to ensure that the corresponding rules have been applied to the newly added formulas. This is proved formally by setting up a tedious but routine induction. 
\end{proofsketch}

  For any arbitrary "open tableau" $T$ we associate it with a model $\M_T = (\W,\D,\R,\live,\val)$ where:
\begin{itemize}
    \item $\W=\{w\mid (w: \Gamma,\sigma)$ is a node in $T\}$
    \item $\D=\Var$
    \item $\R = \{ (w,v) \mid v$ is of the form $wv'$ for some $v'\}$
    \item For every $w\in \W$, if $F\ne \emptyTree$ then $\live(w) = \{z\mid z\forestVariable F_w\}$, otherwise $\live(w) = S_w$ where "last node" of $w$ is $(w:\Gamma_w,S_w,F)$.
    \item For every $w\in \W$ and $P\in \Ps$,  define 
    $\val_i(w,P) = \{ (x_1,\ldots x_k) \mid P(x_1\ldots x_k) \in \Gamma_w\}$ where "last node" of $w$ is $(w:\Gamma_w,S_w,F)$.
\end{itemize}

Typically at this point we prove the truth lemma to complete the proof that states for any node $(w:\Gamma,S,F)$ if $\phi\in \Gamma$ then $\M_T,w,\identity \models \phi$ but in this case the claim will not hold in general. However we can pin point where the problem is. 
 The next lemma says that if this property is violated, then the problem occurs only because of some "nested $\forall$ formula" while trying to satisfy an existentially quantified subformula.

\begin{lemmarep}
\label{lemma-leaf-violation}
Let $T$ be an "open tableau" and $\M_T$ be the associated model. For every node $\tau = (w:\Gamma,S,F)$  in $T$ and every $\phi\in \Gamma$ if $\M_T,w,\identity \not\models \phi$ then there exists some $\forall x~\psi \in \SFplus[\phi]$ such that $\forall x~\psi$ is a "nested $\forall$ formula" and there exists some descendant node $\tau' = (w': \Gamma',S',F')$ of $\tau$ in $T$ such that
 $\exists y \beta \in \compPlus[\psi]$ and some variable $z \in \live(w')$ such that $z$ is a leaf node in some tree of $F'$ with $\exists y \beta\in \forestLabel(z)$ and $\M,w',\identity_{[x\mapsto z]} \not\models \exists y \beta$.
\end{lemmarep}
\begin{proof}
\label{completeness-appendix}
Let $(w:\Gamma,S,F)$ and $\phi\in \Gamma$ such that $\M_T,w,\identity \not\models \phi$.  The proof is by reverse induction on the height of the node $(w:\Gamma,S,F)$ in $T$.  For the base case, $(w:\Gamma,S,F)$ is a leaf node, and hence $\Gamma$ contains only "literals". Thus, if $P(x_1,\ldots x_n)\in \Gamma$ then by the definition of $\val_i(w,P)$ we have $(x_1\ldots, x_n) \in \live(w,P)$ and hence$\M_T,w,\identity \models P(x_1\ldots x_n)$. So in this case the claim is vacuously true.

For the inductive step, assume that $(w:\Gamma,S,F)$ is a non-leaf node in $T$. Hence some rule is  applied at this node and we have one or more descendants depending on the rule. Also, the claim holds for all the descendant nodes. Now we consider various cases depending on which rule was applied.

\begin{itemize}
\item If $\andRule$ was applied then $\Gamma$ is of the form $\Gamma'\cup \{\phi_1\land \phi_2\}$ and the node $\tau = (w:\Gamma,S,F)$ has one descendant $\tau_1 = (w: \Gamma'\cup\{\phi_1,\phi_2\},S,F)$. Now if $\phi\in \Gamma'$ then inductively the claim holds at the $\tau_1 = (w: \Gamma'\cup\{\phi_1,\phi_2\},S,F)$, which means we have a descendant $\tau'$ of $\tau_1$ as required. But then $\tau'$ is also is also a descendant of $\tau$ and we are done.

Otherwise $\phi = \phi_1\land \phi_2$. By assumption we have $\M_T,w,\identity\not\models \phi_1\land \phi_2$ and hence there is some $\phi_i$ where $i\in \{1,2\}$ such that $\M_T,w,\identity \not\models \phi_i$. But then $\phi_i$ is also a formula in the descendant $\tau_1 = (w: \Gamma'\cup\{\phi_1,\phi_2\},S,F)$ and hence the claim holds at $\tau_1$. This means that we have a descendant of $\tau'$ of $\tau_1$ as required. But then $\tau'$ is also is also a descendant of $\tau$ and we are done.

\item If $\orRule$ is similar to the $\andRule$.

\item If $\existsRule$ was applied then $\Gamma$ is of the form $\Gamma'\cup \{\exists y~ \phi_1\}$ and the node $(w:\Gamma,S,F)$ has one descendant $\tau_1 = (w:\Gamma'\cup \{\phi_1\},S,F)$. Again if $\phi\in \Gamma'$ then the claim follows by induction. Otherwise,  $\phi$ is $\exists y~\phi_1$. Note that by construction, $y\in \live(w)$ which implies that the claim holds at $\tau_1$. Hence, we have a descendant of $\tau'$ of $\tau_1$ as required. But then $\tau'$ is also is also a descendant of $\tau$ and we are done.

\item If $\forallRule$ is applied at $(w: \Gamma,S,F)$ then $\Gamma$ is of the form $\Gamma'\cup \{\forall y~ \phi_1\}$ and we have a descendant $\tau_1 = (w: \Gamma' \cup \{\phi^*[z/y]\mid z\forestVariable F\}, S,F)$. Again if $\phi\in \Gamma'$ then the claim follows by induction. Otherwise, $\phi$ is $\forall y~\phi_1$. So for some $z\in \live(w)$ we have $\M,w,\identity \not\models \phi'[z/x]$. But then, the claim holds at $\tau_1$. This means that we have a descendant of $\tau'$ of $\tau_1$ as required. But then $\tau'$ is also is also a descendant of $\tau$ and we are done.

\item If $\diamondRule$ is applied at $(w:\Gamma,S,F)$ then $\Gamma$ is of the form $\{l_1\ldots l_s\} \cup \{\Diamond\phi_1\ldots \Diamond\phi_n\}\cup \{\Box\alpha_1\ldots \Box\alpha_m\}$ for some $s,m\ge 0$ and $n\ge 1$.

As argued in the base case, $\phi$ cannot be a "literal". If $\phi$ is of the form $\Diamond \phi_i$ then there is a child $\tau_1 = (wv_i: \Gamma',S',\notInit)$ of $\tau$ such that $\phi_i \in \Gamma'$ but by assumption, $\M_T,wv_i,\identity \not\models \phi_i$. Hence the claim holds at $\tau_1$. This means that we have a descendant of $\tau'$ of $\tau_1$ as required. But then $\tau'$ is also is also a descendant of $\tau$ and we are done.

On the other hand,  if $\phi$ is of the form $\Box \alpha_j$,  then there at least one child  $\tau_1= (wv_i: \Gamma',S',\notInit)$ of $\tau$ such that $\M_T,wv_i,\identity \not\models \alpha_j$ but by $\diamondRule$, $\alpha_j \in \Gamma'$. Hence the claim holds at $\tau_1$. This means that we have a descendant of $\tau'$ of $\tau_1$ as required. But then $\tau'$ is also is also a descendant of $\tau$ and we are done.

\item If $\nestedForallRule$ is applied at $(w:\Gamma,S,F)$ then $F = \notInit$ and $\Gamma$ has a unique $\forall x~\psi$ which is a "nested $\forall$ formula". Let the child of $\tau$ be $\tau_1 = (w,\Gamma',S',F')$. Again if $\phi\in \Gamma'$ then the claim follows by induction. Otherwise $\phi$ is the "nested $\forall$ formula" $\forall x~\psi$. Now since $\M,w,\identity \not\models \forall x~\psi$ there is some $z\in \live(w)$ such that $\M,w,\identity_{[x\to z]} \not\models \psi$. This implies that there is some $\psi' \in \forestLabel(z)$ such that $\M,w,\identity_{[x\to z]} \not\models \psi'$. Now, again as argued in the base case, $\psi'$ cannot be a "literal" and as argued in the $\diamondRule$, $\psi'$ cannot be a "module". Now, if $\psi'$ is of the form $\forall y \beta$ then by construction, we have $\beta[z/x] \in \Gamma'$ and hence the claim holds at $\tau_1$. This means that we have a descendant of $\tau'$ of $\tau_1$ as required. But then $\tau'$ is also is also a descendant of $\tau$ and we are done.

Finally, if $\psi'$ is of the form $\exists y \beta$ and $z$ is a non-leaf node then there is some child $y_z$ of $z$ such that $\exists y_z\beta[x/z,y/y_z] \in \Gamma'$ and hence the claim holds at $\tau_1$. This means that we have a descendant of $\tau'$ of $\tau_1$ as required. But then $\tau'$ is also is also a descendant of $\tau$ and we are done. \\
Otherwise, $z$ is a leaf node and $\psi'$ is of the form $\exists y \beta$. But then this implies that the claim is true at $\tau$ itself and we are done.

\item If $\EndRule$ rule is applied at $(w:\Gamma,\sigma)$ then $\Gamma$ is of the form  $\{l_1\ldots l_s\} \cup \{\Box\beta_1\ldots \Box\beta_m\}$ for some $s,m\ge 0$. There is one descendant $(w:\{l_1\ldots l_s\},\sigma)$ which is a leaf node in $T$. As argued in the base case, $\phi$ cannot be a "literal" and since $w$ does not have any successor in $\M_T$, we have $\M_T,w,\identity\models \Box\alpha_j$ for every $j\le m$. Thus, the claim holds vacuously.
\end{itemize}
\end{proof}
\begin{proofsketch}
    The proof is by reverse induction on the height of the node $\tau= (w:\Gamma,S,F)$ in $T$. If $\tau$ is a leaf node in $T$ then $\Gamma$ contains only propositions and hence by construction, we can argue that the claim holds vacuously. Otherwise, $\tau$ has one or more successors as a result of application of some rule at $\tau$. 
    
    If $\nestedForallRule$ was applied at $\tau$ then we prove that if the violation is because of a "nested $\forall$ formula" then either $\tau$ itself satisfies the required condition or there has to be a successor of $\tau$ where the violation happens and hence the claim holds inductively.

    In all other cases we argue that there is some formula that is false in the descendant and hence the claim holds by induction. The details are provided in \Cref{completeness-appendix}.
\end{proofsketch}

Now we are ready to prove the $(\Rightarrow)$ direction of the \autoref{thm-tableau for EBBE}. 

\begin{proof}[Proof of $(\Rightarrow)$ of \autoref{thm-tableau for EBBE}]
Let $T$ be the "open tableau" rooted at $(r: \{\theta\},S_0,\notInit)$. The goal is to build an "increasing domain model" $\M$ in which $\theta$ is "satisfiable". We build a sequence of "open tableau" $T_0,T_1\ldots$ is constructed by induction on $i$ such that either we halt at some $T_n$ such that the corresponding model $\M_n$ is the intended model or we take the limit of this tableau sequence and prove that $\theta$ is "satisfiable" in the model corresponding to this limiting tableau.

The sequence $T_0,T_1\ldots$ is constructed by induction on $i$. We will inductively maintain that for every $i \ge 0$ the following holds:
\begin{enumerate}
    \item \label{item-Gamma-extends} $T_{i+1}$ is an "extension" of $T_i$ "wrt" some node $\tau=(w:\Gamma,S,F)$ and some leaf variable $z$ in $F$.
    \item\label{item-S-extends} If there is a node $\tau=(w:\Gamma,S,F)$ in $T_i$ and a leaf $z\in F$ such that $\exists y~ \psi\in \forestLabel(z)$ then there is some $j> i$ such that in $T_j$ then node corresponding to $\tau$ given by $\tau' = (w:\Gamma',S',F')$ where $z$ is not a leaf node in $F'$.
\end{enumerate}

 In the base case, $T_0 = T$. \Cref{item-Gamma-extends} holds trivially.
By induction, assume that we have built $T_0,T_1\ldots T_i$. Now if $\M_i,r,\identity \models \theta$ then $\theta$ is "satisfiable" and we are done. Otherwise, from \Cref{lemma-leaf-violation} there exists some $\forall x~\psi \in \SFplus[\theta]$ such that $\forall x~\psi$ is a "nested $\forall$ formula" and there exists some descendant  node $\tau' = (w': \Gamma',S',F')$ of $\tau_i$ in $T_i$ and $z$ is a leaf in $F'$ such that $\exists y \psi' \in \forestLabel(z)$ and $\M_i,w',\identity_{[x\to z]} \not\models \exists y \psi'[x/z]$.  In this case we say $(\tau,z)$ is a pair where there is a leaf violation for $z$.

Let $\mathcal{T} = \{ (\tau,z)\mid$ in the node $\tau$ in $T_{i}$, we have a leaf violation for $z\}$. From the above argument, $\mathcal{T} \ne \emptyset$. Now we say $(\tau,z) < (\tau',z')$ if the distance from root to $z$ in the "skolem-forest" (in the tree to which $z$ belongs) of $\tau$ is less than the distance from root to $z'$ in the "skolem-forest" (in the tree to which $z'$ belongs) of $\tau'$. 
Pick some $(\tau,z)\in \mathcal{T}$ for which the above defined distance is minimal. Define  $T_{t+1}$ to be the "extension" if $T_i$ "wrt" $\tau$ for $z$. Clearly by construction, (\cref{item-Gamma-extends}) holds. 

To see that (\cref{item-S-extends}) holds, note that in the new $T_{i}$ for some $i\ge 0$ then in $T_{i+1}$ if a new leaf violation happens to a fresh variable $z$, the distance from $z$ to its corresponding root will be strictly more that the corresponding distance that was chosen in $T_i$. Hence, every leaf node violation becomes a candidate to be picked at some $j \ge i$.

\bigskip
Finally suppose $\M_i,w,\identity \not\models \theta$ for every $i\ge 0$ then note that by construction, $\W,\D$ and $\R$ remains the same for every $\M_i$. Hence let $\M_i = (\W,\D,\live_i,\R,\val_i)$. Define the limiting model $\M = (\W,\D,\live,\R,\val)$ where for every $w\in \W$ let $\live(w) = \bigcup\limits_{i\ge 0} \live_i(w)$ and for every predicate $P$ occurring in $\theta$ let $\val(w,P) = \bigcup\limits_{i\ge 0} \val_i(w,P)$. 

Note that $\M$ is well defined since every $\M_{i+1}$ extends $\M_i$ ("ie" $\D_{i} \subseteq \D_{i+1}$ and $\val_{i}(w,P) \subseteq \val_{i+1}(w,P)$). Now we claim that $\M,r,\identity \models \theta$ where $(r:\{\theta\},S_0,\notInit)$ is the root or $T_0$.

\begin{claim}
\label{claim-limit-model-works} For every $\phi\in \SFplus[\theta]$ and for every $i\ge 0$ if there is a node $\tau = (w:\Gamma,S,F)$ in $T_i$ such that $ \phi\in \Gamma$ then $\M,w,\identity \models \phi$.
\end{claim}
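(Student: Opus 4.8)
The plan is to prove \Cref{claim-limit-model-works} by induction on the syntactic structure of $\phi$, regarding throughout any substitution instance of a member of $\SFplus[\theta]$ as having the same rank as that member, so that the induction hypothesis is available for every (substitution instance of a) proper subformula. Note that the induction is on $\phi$ alone and is uniform in the stage $i$, so the hypothesis may be invoked at any $T_j$, not merely at $T_i$. Before starting I would record three monotonicity facts, all immediate from the construction and from \Cref{lemma-skolem-tree-extension}: (i) $\W$, $\D$, and $\R$ are the same in every $\M_i$, so the $\R$-successors of a world are unambiguous; (ii) along the path of any fixed world name $w$ no formula is ever deleted except the unique "nested $\forall$ formula" consumed by an application of $\nestedForallRule$ — in particular every "module", hence every "literal", present at $w$ survives to the "last node" $\lastW[w]$, and $\live_i(w)\subseteq\live_j(w)$, $\val_i(w,P)\subseteq\val_j(w,P)$ for $j\ge i$; and (iii) every formula introduced in passing from $T_i$ to $T_{i+1}$ mentions some variable that is fresh at stage $i$, so a formula all of whose variables are already present in $T_i$ is never created anew afterwards.

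With these in hand the easy cases are routine. If $\phi=P(\vec x)$ it survives to $\lastW[w]$ in $T_i$ by (ii), so $\vec x\in\val_i(w,P)\subseteq\val(w,P)$; if $\phi=\neg P(\vec x)$ and $\vec x\in\val(w,P)$ then $P(\vec x)$ occurs at $\lastW[w]$ in some $T_j$, contradicting (iii) together with openness of $T_i$. If $\phi=\phi_1\land\phi_2$ or $\phi_1\lor\phi_2$ then, $\phi$ not being a "module" and the $w$-path being finite and ending in "literals", $\andRule$ (resp.\ $\orRule$) must be applied to $\phi$ at some $w$-node, producing a descendant at $w$ containing $\phi_1,\phi_2$ (resp.\ one of them); apply the hypothesis. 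If $\phi=\exists y\,\phi_1$ then $\existsRule$ reuses $y$ itself (so $y\in\live(w)$ in the limit) and yields a descendant at $w$ containing $\phi_1$; the hypothesis gives $\M,w,\identity\models\phi_1$, hence $\M,w,\identity\models\exists y\,\phi_1$ since $\identity_{[y\mapsto y]}=\identity$. If $\phi=\Box\alpha$ then either $\EndRule$ is applied at $\lastW[w]$, so $w$ has no successor and $\M,w,\identity\models\Box\alpha$ vacuously, or $\diamondRule$ is applied and every successor node $wv$ carries $\alpha$, so the hypothesis at each successor gives $\M,w,\identity\models\Box\alpha$. If $\phi=\Diamond\alpha$ then $\diamondRule$ (not $\EndRule$, a $\Diamond$-formula being present) creates a successor carrying $\alpha$, and the hypothesis finishes the case.

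Two cases carry the real content. If $\phi=\forall y\,\phi_1$ is \emph{not} a "nested $\forall$ formula", then by the $\ourEBBE$ syntax $\phi_1$ is a positive Boolean combination of "modules" and $\forallRule$ is applied somewhere along the $w$-path; for every $d$ in $\live(w)$ at the moment of that application we obtain a clean instance of $\phi_1$ with $y$ set to $d$ at $w$, and for an element $d$ added only at a later stage $T_j$ — necessarily as an isomorphic copy, in the "skolem-forest" sense, of some $d'$ already present — the rule-re-application step built into \Cref{lemma-skolem-tree-extension} puts the corresponding copy at $w$ in $T_j$; either way the hypothesis gives $\M,w,\identity_{[y\mapsto d]}\models\phi_1$ for all $d\in\live(w)$, so $\M,w,\identity\models\forall y\,\phi_1$. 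If $\phi=\forall x\,\psi$ \emph{is} a "nested $\forall$ formula", then since such a formula can enter a world only as the immediate scope of a $\Diamond$, we have $F=\notInit$ at $\tau$ and $\nestedForallRule$ is applied there; in the limit the "skolem-forest" at $w$ has grown to some $F^\infty_w$ with $\live(w)=\{z:z\in F^\infty_w\}$, and the second invariant maintained in the $(\Rightarrow)$-construction — a leaf whose label contains an $\exists$-formula eventually becomes a non-leaf — ensures that each $z\in\live(w)$ is, at some stage $T_{j_z}$, either a non-leaf or carries no $\exists$-formula in its label. Fix $z$ together with such a stage; its label is an "atom" $\atom[\psi]$ of $\psi$ "wrt" $x$. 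For each $\beta\in\atom[\psi]$ not of the form $\exists y\,\beta'$ the "expansion of skolem-forest" places $\beta[z/x]$ at $w$, and for each $\beta=\exists y\,\beta'$ it places $\exists y_z\,\beta'[x/z,y/y_z]$ at $w$ (using that $z$ is a non-leaf there); applying the hypothesis to all of these — each a substitution instance of a proper subformula of $\psi$, hence of strictly smaller rank than $\forall x\,\psi$ — and then using that an "atom" of $\psi$ is a Hintikka set for $\psi$ ($\psi$ belongs to it, it is closed under $\land$, it chooses a disjunct at each $\lor$, and it is maximal among subformulas), we get $\M,w,\identity_{[x\mapsto z]}\models\psi$; ranging over $z\in\live(w)$ gives $\M,w,\identity\models\forall x\,\psi$. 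Instantiating the claim with $\phi=\theta$, $i=0$, and the root of $T_0$ then yields $\M,r,\identity\models\theta$, so $\theta$ is "satisfiable", completing the $(\Rightarrow)$ direction.

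The main obstacle I anticipate is not any single case but the bookkeeping that makes the non-nested-$\forall$ case and the ``eventually a non-leaf'' argument legitimate: one must verify in detail that when \Cref{lemma-skolem-tree-extension} enlarges a world's local domain by fresh copies of existing elements it (a) re-applies all pending rules so that each new element inherits exactly the instantiated formulas its original carried, (b) introduces no new world (Definition~\ref{def-tableau-extension}(3)), and (c) preserves openness. These properties are precisely what licenses invoking the induction hypothesis ``one stage later'' and what underlies the negative-literal base case; granting them, the rest is routine propagation of truth down the tableau.
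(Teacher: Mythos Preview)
Your proposal is correct and follows the same approach as the paper: structural induction on $\phi$, with the crucial step being the nested-$\forall$ case, where for each element $z\in\live(w)$ one locates a stage at which $z$ is a non-leaf in the skolem-forest and then applies the induction hypothesis to the formulas the expansion places at $w$. The only organizational difference is that you treat the nested $\forall$ as its own case in the induction --- which is cleaner --- whereas the paper folds that argument into sub-case~(b) of its $\Diamond$-case and then asserts, in its standalone $\forall$-case, that a nested $\forall$ formula never occurs in $\Gamma$; that assertion is literally false at the first node of a world produced by $\diamondRule$, but its intended content is already covered by the $\Diamond$ sub-case. Two minor remarks on your write-up: the phrase ``proper subformula of $\psi$'' should read ``proper subformula of $\forall x\,\psi$'', since $\psi$ itself belongs to every atom of $\psi$; and the Hintikka-set argument is more than you need --- precisely because $\psi$ lies in its own atom, (a clean renaming of) $\psi[z/x]$ is already in the expansion, so a single application of the hypothesis to that formula suffices, which is what the paper does. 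Your explicit handling of negative literals and your upfront monotonicity facts (i)--(iii) are welcome; the paper leaves the former implicit and the latter unstated.
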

Clearly, from the claim it follows that $\M,r,\identity \models \theta$. Hence it remains to prove the claim.  Let $i\ge 0$ and $\tau = (w:\Gamma,S,F)$ be a node in $T_i$.  The claim is proved in \Cref{last-claim-proof}.
\end{proof}

\begin{appendixproof}[Proof of \Cref{claim-limit-model-works}]
\label{last-claim-proof}

The proof of the claim is by induction on the structure of $\phi$.

\begin{itemize}
    \item If $\phi$ is a "literal" of the form $P(x_1\ldots x_n)$ and $P(x_1\ldots x_n) \in \Gamma$ then by definition of $\val(w,P)$ we have $\M,w,\identity \models \phi$.
    \item The cases of $\land$ and $\lor$ are standard.
    \item If $\phi$ is of the form $\Diamond \psi$ and $\phi \in \Gamma$ then we consider two cases:
    \begin{itemize}
        \item[(a)] If $\psi$ is not a "nested $\forall$ formula" then there is some descendant $\tau'= (wv_i: \Gamma',S',F')$ of $\tau$ in $T_i$ where $\psi\in \Gamma'$. Hence by induction we have $\M,wv_i,\identity \models \psi$ which implies $\M,w,\identity \models \Diamond \psi$.
        \item[(b)] If $\psi$ is a "nested $\forall$ formula" of the form $\forall x~\alpha$ then there is some descendant $\tau'= (wv_i: \Gamma',S',F')$ of $\tau$ in $T_i$ which is a result of applying $\nestedForallRule$ for $\forall x~\alpha$.  Now suppose $\M,wv_i,\identity \models \forall x~\alpha$ then we are done. Otherwise let $\live(w) = \{z_0,z_1\ldots\}$ be an arbitrary ordering of $\live(w)$ and let and $z_l$ be the smallest index such that $\M,wv_i,\identity_{[x\to z_l]} \not\models \alpha$. Now since $z_l\in \live(w)$ there is some $j\ge 0$ and a node $\tau'' = (wv_i,\Gamma'',S'',F'')$ in $T_j$ such that $z_l$ is not a leaf node in $F''$. Now by construction, $\alpha\in \forestLabel''(z)$ and hence $\alpha[z_l/x] \in \Gamma''$. But then, by induction hypothesis we have $\M,wv_i,\identity \models \alpha[z_l/x]$ which implies $\M,wv_i,\identity_{[x\to z_l]}\models \alpha$ which is a contradiction. 
    \end{itemize}

    \item In case of $\Box \psi$ note that $\psi$ is not a "nested $\forall$ formula" and hence is similar to $(a)$ in the previous case.
    \item If $\phi$ is of the form $\exists y \psi$ and $\phi \in \Gamma$ then there is some descendant $\tau'= (w,\Gamma',S',F')$ of $\tau'$ in $T_i$ where $\psi\in \Gamma'$ and hence by induction we have $\M,w,\identity \models \psi$ which implies $\M,w,\identity \models \exists y \psi$.
    \item If $\phi$ is of the form $\forall y~ \psi$ then by construction, $\phi$ is not a "nested $\forall$ formula". Pick an arbitrary $z\in \live(w)$. Now since $z\in \live(w)$ there is some $j$ such that $z\in \live_j(w)$ and we also have a corresponding node $\tau' = (w:\Gamma',S',F')$ in $T_j$ where $\forall y \psi\in \Gamma'$. Hence there is some successor $\tau''= (w:\Gamma'',S'',F'')$ of $\tau'$ in $T_j$ where $ \psi[y/z] \in \Gamma''$. But then, by \Cref{lemma-leaf-violation}, $\M,w,\identity \models \psi[y/z]$. Since $z$ was picked arbitrarily, we have $\M,w,\identity \models \forall y~\psi$. \qedhere
\end{itemize}
\end{appendixproof}

\begin{corollary}
"Satisfiability problem" for $\ourEBBE$ is decidable.
\end{corollary}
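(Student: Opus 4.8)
The plan is to obtain decidability directly from the tableau characterization of satisfiability given by \autoref{thm-tableau for EBBE}. Given an input $\ourEBBE$-formula $\phi$, I would first rewrite it into an equivalent "clean" formula $\theta$ in negation normal form (a polynomial-time step that preserves satisfiability), and then note that, by \autoref{thm-tableau for EBBE}, $\phi$ is "satisfiable" in an increasing domain model iff there is an "open tableau" rooted at $(r:\{\theta\},S_0,\notInit)$ with $S_0=\FV[\theta]\cup\{z\}$. Hence the task reduces to deciding whether such an "open tableau" exists. I would establish this by proving three things: (i) every "saturated tableau" for $\theta$ is a \emph{finite} tree whose size is bounded by a computable (indeed elementary) function of $|\theta|$; (ii) at each node only finitely many, effectively computable applications of the "tableau rules" are possible; and (iii) the "open" condition is decidable. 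Given (i)--(iii), the set of all saturated tableaux for $\theta$ is finite and can be effectively enumerated, each one can be tested for openness, and we conclude.

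The heart of the argument is (i). First I would bound the local-domain component $S$ occurring at any node. At the root $|S_0|\le|\theta|$; the only rule that enlarges $S$ additively is $\diamondRule$, which adds the outermost existential variables of the subformulas carried into a successor world (at most $|\theta|$ of them), and the only rule that enlarges $S$ multiplicatively is $\nestedForallRule$, which replaces $S$ by the vertex set of a "skolem-forest" whose size, by \Cref{lemma-skolem-forest-has-bounded-size}, is at most $|S|\cdot m^{2^{O(m)}}$ with $m=|\theta|$. Since $\diamondRule$ strictly decreases the maximum modal depth of the formulas at a node and is therefore applied at most $|\theta|$ times along any branch, and $\nestedForallRule$ fires at most once per "world", an induction along branches shows $|S|$ never exceeds a fixed bound $g(m)$ that is doubly exponential in $m$. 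It then follows that each node label $\Gamma$, whose formulas are substitution instances of members of $\SFplus[\theta]$ (each with at most $|\theta|$ free variables, substituted by current local-domain variables), has size at most $|\SFplus[\theta]|\cdot g(m)^{|\theta|}$, still elementary in $m$.

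Second I would argue termination. Within a single "world" --- that is, between two consecutive applications of $\diamondRule$ --- the rules $\nestedForallRule$ and $\trivialSkolemRule$ can each be applied at most once (they require $F=\notInit$ and produce $F\ne\notInit$), while each of $\andRule$, $\orRule$, $\existsRule$, $\forallRule$ strictly decreases the finite multiset of sizes of the non-"literal" formulas in $\Gamma$: the first three replace a formula by strictly smaller formulas, and $\forallRule$ replaces $\forall y\,\phi$ by the (at most $g(m)$-many) formulas $\phi^*[z/y]$, each strictly smaller than $\forall y\,\phi$. So every branch terminates inside each world, and since the maximum modal depth strictly decreases with $\diamondRule$ and is at most $|\theta|$, every branch of the "tableau" is finite; as the branching degree is finite at every node (by \Cref{prop-some rule can be applied always} together with the bounds above), the whole "tableau" is finite and of bounded size. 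Finally, (iii) is immediate --- checking that no node label contains both some formula and its negation is trivial --- and the nondeterministic choices ($\orRule$ offers two disjuncts; $\nestedForallRule$ ranges over the finitely many candidate "skolem-forest"s of bounded size) are finitely branching and enumerable. Hence the existence of an "open tableau" for $\theta$ is decidable, and by \autoref{thm-tableau for EBBE} this decides satisfiability of $\phi$.

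The step I expect to be the main obstacle is exactly this termination-and-size bound, and within it two points need care: controlling how $|S|$ accumulates --- each traversal of a modal layer can multiply the local-domain size by the doubly-exponential skolem-forest bound of \Cref{lemma-skolem-forest-has-bounded-size}, so one must verify that iterating this only modal-depth-many times keeps the overall bound elementary rather than producing a non-elementary tower --- and designing the within-world well-founded measure so that $\forallRule$, which simultaneously introduces many formulas, still counts as a strict decrease. Once finiteness and the size bound are in place, the remaining steps (enumerating candidate tableaux and verifying the "open" condition) are routine.
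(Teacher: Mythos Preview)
Your proposal is correct and follows essentially the same route as the paper: reduce satisfiability to existence of an open tableau via \autoref{thm-tableau for EBBE}, invoke \Cref{lemma-skolem-forest-has-bounded-size} to bound the skolem-forests chosen in $\nestedForallRule$, and conclude that bounded-size tableaux suffice for the search. The paper's own proof is a three-sentence sketch of exactly this argument (concluding \textsf{2NExpTime}), so your proposal is a careful elaboration of it.
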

\begin{proof}
    Suppose $\phi$ is satisfiable. Then from \Cref{thm-tableau for EBBE}, there is an "open tableau". From \Cref{lemma-skolem-forest-has-bounded-size}, we can always "skolem-forest" of bounded size and hence the labels at every node in the tableau is at most double exponential. Hence we directly get a "2NExpTime" algorithm.
\end{proof}

\section{Conclusion}
\label{sec-conclusion}
Despite not satisfying "finite model property", we proved that the $\ourEBBE$ bundled fragment is decidable, thus answering the open question posed in \cite{BundledJournal} and collapsing their `trichotomy' to a `dichotomy'. However note that we have only considered the "increasing domain models". There is also an other open problem which is the $\Box\exists$ fragment over "constant domain models" which is also left open in \cite{BundledJournal} for which there is no "finite model property".

 We believe that the tools and techniques developed in this paper can be used to prove decidability for $\Box\exists$ fragment over "constant domain models", but it remains to be verified. The extensions with constants, equality, frame restrictions and other such standard variations also need to be studied in details. The lower bound is also not addressed which we believe is "2ExpSpace" the problem might be "2ExpSpace"-complete. But this needs to be verified.

This paper offers a novel way to define pseudo-finite witness via "skolem-forest" which we believe can be utilized in other extensions of first order logic to prove decidability especially when the logic does not satisfy the "finite model property" (like the two variable fragment of term modal logic with equality \cite{2varTMLjournal}.



\appendix

\bibliographystyle{kr}
\bibliography{long, ref}

\end{document}